\newtheorem{thm}{Theorem}{}{}
\newtheorem{lem}{Lemma}{}{}
{}{}
\newtheorem{prot}{Protocol}{}{}
\def\bra#1{\mathinner{\langle{#1}|}}
\def\inp#1#2{\bra{#1}\mathinner{{#2}\rangle}}
\def\inp2#1#2{\langle{#1} , \mathinner{{#2}\rangle}}
\newcommand{\beq}{\begin{equation}}
\newcommand{\eeq}{\end{equation}}
\newcommand{\commentout}[1]{}
\newcommand{\be}{\begin{enumerate}}
\newcommand{\ee}{\end{enumerate}}
\definecolor{blk}{rgb}{0,0,0}
\definecolor{blu}{rgb}{0.2,0.2,0.7}
\definecolor{red}{rgb}{0.7,0.2,0}
\begin{document}

\title{Information and communication in polygon theories.}

\author{Serge Massar\footnotemark[1] and Manas K. Patra\footnotemark[1]}

\renewcommand{\thefootnote}{\fnsymbol{footnote}}
\footnotetext[1]{Laboratoire d'Information Quantique, CP225, Department of Physics, Universit{\' e} libre de Bruxelles (ULB), Av. F. D. Roosevelt 50, B-1050 Bruxelles, Belgium. \texttt{\{smassar,manas.kumar.patra\}@ulb.ac.be}}

\maketitle

\begin{abstract}
Generalized probabilistic theories (GPT) provide a framework in which one can formulate physical theories that includes classical and quantum theories, but also many other alternative theories. In order to compare different GPTs, we advocate an approach in which one views a state in a GPT as a resource, and quantifies the cost of interconverting between different such resources. We illustrate this approach on polygon theories (Janotta et al. New J. Phys 13, 063024, 2011)  that interpolate (as the number $n$ of edges of the polygon increases) between a classical trit (when $n=3$) and a real quantum bit (when $n=\infty$). Our main results are that simulating the transmission of a single $n$-gon state requires more than one qubit, or more than $log(log(n))$ bits, and that $n$-gon states with $n$ odd cannot be simulated by $n'$-gon states with $n'$ even (for all $n,n'$). These results are obtained by showing that the classical capacity of a single $n$-gon state with $n$ even is 1 bit, whereas it is larger than 1 bit when $n$ is odd; by showing that transmitting a single $n$-gon state with $n$ even violates information causality; and by showing studying the communication complexity cost of  the nondeterministic not equal function using $n$-gon states.
\end{abstract}

\section{Introduction.}

The formalism of Generalised Probabilistic Theories (GPT) introduced several decades ago~\cite{Mackey63,Davies70,Edwards71,Foulis81,Ludwig85} provides a framework for studying generalisations of classical and quantum theory. The development of quantum information theory \cite{NielsenChuang10}  motivated a renewed interest in GPT \cite{Hardy01,Barrett07,Chiribella10,Chiribella11,Masanes11}, with the aim of understanding from the point of view of information processing, what makes quantum theory special \cite{Fuchs02,Brassard05}. Some phenomena considered as uniquely quantum, such as no-cloning and no-broadcasting, are generic features of GPT, see e.g.~\cite{Barrett07,Barnum06,Barnum07}.
Considerable attention has also focused on theories more non-local than quantum \cite{PopescuRohrlich94,VanDam05,BLMPPR05,BarrettPironio05,BBLMTU06}, as no--signaling theories also have typically quantum properties such as intrinsic randomness, secret key generation, no-cloning, see e.g.~\cite{Masanes06,BHK,Acin06}. On the other hand certain quantum features, such as continuity of operations \cite{Barrett07}, entanglement swapping and teleportation\cite{Barnum08,Barnum09}, non triviality of communication complexity \cite{VanDam05,BBLMTU06},  uniqueness of entropy \cite{Barnum10,ShortWehner10} and  data processing inequalities\cite{Barnum10} do not hold in many of these theories.

The study of quantum communication and entanglement has benefited from a quantitative approach in 
which one quantifies the cost of interconverting between different resources. 
For instance dense coding \cite{BennettWiesner} and teleportation \cite{BBCJPW} show how quantum communication, entanglement, and classical communication can be interconverted. The amount of classical communication required to simulate the communication of a single qubit and the non--local correlations produced by a singlet have been established \cite{Maudlin92, BCT99,Steiner00,Csirik02,CGM00,MBCC01,BaconToner03A,TonerBacon03B}, with lower bounds
coming from communication complexity, see the review \cite{BCMdW10} for the case of many qubits, and \cite{MBCC01} for the case of a single qubit. These iterconversions show how much more powerful one resource is than another, and under what conditions are two resources equivalent.

We advocate here that a similar approach will be extremely fruitful in the study of GPT. In this approach one views the communication of a GPT state as a resource, and one wishes to quantify how much this resource is worth. 
What is the cost of replacing the communication of one GPT state by classical bits, by qubits, by states of another GPT?
This question is not altogether new, but has been considered essentially only in the context of 
interconversions between non local correlations  \cite{BLMPPR05,BarrettPironio05,JonesMasanes05,BrussnerSkrzypczyk09,FWW09}. 
However this approach can be applied to all GPTs, and to simpler problems such as one way communication of a GPT state. Prior works in this direction include 
 \cite{FMPT13} in which it was shown that one way communication using a GPT based on the completely positive cone requires exponential classical (and conjectured quantum) communication to simulate, and \cite{BKLS14} in which one way communication of ``hypercube bits'' is discussed.

Quantifying the cost of interconversion between GPTs is realised through two complementary techniques. First one  constructs explicit protocols describing how GPT1 can be simulated by GPT2. Second one proves lower bounds on how many resources of GPT1 are required to simulate GPT2. The second kind of result will generally be obtained by exhibiting an information processing task that requires many resources using GPT1, but can be done cheaply using GPT2. Such tasks are very interesting, because they can discriminate between  theories. They could therefore be used as physical/information theoretic arguments for selecting theories from the large space of GPTs.

Here we illustrate this approach in the context of polygon theories\cite{Janotta11}, so named because their state space is given by a regular polygon.  We refer to them as $n$-gon theories, where $n$ denotes the number of vertices of the polygon. This family  of GPTs interpolates between a classical trit (when $n=3$) and a qubit (when $n=\infty$). The non locality of these theories was studied in \cite{Janotta11}, where it was shown that there are profound differences between polygon theories with odd and even number of vertices:
the odd theories approach Tsirelson's bound from below, while even theories approach it from above. These results were however given without proof of tightness, and required a choice of tensor product. 

Here we consider one way communication with polygon theories. This is a simpler context than \cite{Janotta11}, as the state space is smaller, and it does not require the introduction of a tensor product structure. For this reason one can hope to obtain more detailed results.
Our main results are:
\begin{itemize}
\item For all $n$, simulating the communication of an $n$-gon state cannot be done with a single qubit.
\item  If the two parties do not have any shared randomness, we exhibit a protocol to simulate the communication of an $n$-gon state by sending $\log n$ bits. We also prove a lower bound of
 $\log(\log n)$ bits for this classical simulation.
\item The $n$-gon theories with $n$ odd cannot be simulated by the theories with $n$ even.
\end{itemize}
These results are obtained by studying the classical capacity of  $n$-gon states, the non deterministic communication complexity of the Not Equal function, and the communication complexity of the index function (also known as random access coding) with $n$-gon states. 
In the conclusion we discuss the open questions.

\section{Polygon theories}\label{square}

We recall here the polygon theories introduced in \cite{Janotta11}  to which we refer for further details.
The set of normalised states of  $n-$gon theory has $n$ extremal states:
\begin{equation}\label{eq:state}
\omega_i = \begin{pmatrix}r_n\cos{\frac{2\pi i}{n}} \\ r_n\sin{\frac{2\pi i}{n}}\\ 1 \end{pmatrix}\in \mathbb{R}^3\quad ,\quad r_n=\sqrt{\sec{(\pi/n)}}\quad ,\quad 0\leq i < n
\end{equation}
The space of normalised states is the convex hull of $\{\omega_k\}$. It is a regular $n$-gon, thereby giving their name to these theories.
The space of unnormalised states is the cone $C_n$ generated by the $\{\omega_k\}$.

The space of effects is the dual $C^*_n$ of the cone $C_n$.
For even $n$ the dual cone $C^*_n$ is generated by the extremal effects
\begin{equation}\label{eq:effect0}
e_j= \frac{1}{2}\begin{pmatrix}r_n\cos{\frac{(2j-1)\pi}{n}} \\  r_n\sin{\frac{(2j-1)\pi}{n}} \\ 1 \end{pmatrix}\ .
\end{equation}
In the even case the cone $C_n$ is weakly self dual: the cone $C^*_n$ equals the cone $C_n$ rotated by the angle $\pi/n$.

For odd $n$ the cone $C^*_n=C_n$ is generated by the extremal effects
\begin{equation}\label{eq:effect1}
e_j= \frac{1}{1+r_n^2}\begin{pmatrix} r_n\cos(2\pi j/n)\\r_n\sin(2 \pi j/n)\\1\end{pmatrix}
\end{equation}
For odd $n$ the cone $C_n$ is strongly self dual: $C^*_n=C_n$.

The unit effect is 
$$u=\left( 0, 0, 1\right)\ .$$ 
The normalised states have unit scalar product with the unit: $\langle \omega_k,u\rangle =1$.

A measurement 
$$M=\{f_k\in C^*_n : \sum_k f_k=u \}$$
is a set of effects (of elements of the dual cone) that sum to the unit effect.
The probability of outcome $k$ given normalised state $\omega$ and measurement $M$ is 
$$P(k\vert \omega)=\langle f_k,\omega\rangle\ .$$
In general a measurement can have an arbitrarily large number of effects. However in \cite{FMPT13} it was shown that by refining a measurement, and decomposing a measurement into a convex combination of other measurements, one can restrict to measurements with at most 3 effects all proportional to the extremal effects. (Here 3 is the dimension of the space in which the states and effects are defined). That is we can restrict to measurements of the form:
\begin{equation}
M=\{\lambda_k e_k, k=1,2,3:  0\leq \lambda_k, \sum_{k=1}^3 \lambda_k e_k=u \}\ .
\label{canonicalmeast}
\end{equation}
In appendix \ref{app1} we present in more detail the structure of measurements with 3 extremal effects.

For $n$ even the complement $u -e_i$ of any extremal effect $e_i$ is an extremal effect. So in this case the measurements $M_i = \{e_i, u-e_i=e_{i+n/2}\}$ constitute 2-outcome measurements. (For odd $n$ there do not exist 2 outcome measurements both of whose effects are extremal).

Note that the extremal effects eqs. (\ref{eq:effect0}, \ref{eq:effect1}) are normalised such that
$0\leq \langle e_j,\omega_i\rangle\leq 1$. These inequalities are saturated in the following cases:
\begin{eqnarray}
\langle e_j,\omega_i\rangle=0 &\mbox{when}& i=j+n/2\ ,\ i=j+n/2-1 \quad \mbox{($n$ even)}\\
 & & i=j+(n+1)/2  \ ,\ i=j+(n-1)/2  \quad \mbox{($n$ odd)}\\
\langle e_j,\omega_i\rangle=1 &\mbox{when}& i=j\ ,\ i=j-1 \quad\mbox{($n$ even)}\\
 & & i=j\quad \mbox{($n$ odd)}
\label{satprob}
\end{eqnarray}

There are important differences between polygon theories with odd and even number of states. Some of these differences are discussed in \cite{Janotta11} in the context of non--locality. They will also appear clearly in what follows.

\section{Classical information capacity of polygon theories.}

\subsection{Statement of the result}

We study the classical capacity of the $n$--gon theories. Alice receives  symbols $x$ drawn from a probability distribution $p_x$. Upon receiving symbol $x$, she sends state $\omega(x)$ to Bob, where $\omega(x)$ is a state of $n$--gon theory. Bob carries out a measurement, obtaining outcome $y$. The capacity of the channel is the maximum, over all probability distributions $p_x$, all encodings $\omega(x)$, all measurements, of the mutual information $I(X;Y)$.
More precisely $I(X;Y)$ is, in the asymptotic limit, the number of bits Alice can send to Bob using many copies of the channel, using block coding, but no entanglement between systems.

This general setting can be simplified by noting that Bob's measurement can be taken from the set of canonical measurements eq. (\ref{canonicalmeast}) (this follows from the data processing inequality, and the fact that the mutual information is convex in the probabilities $p_{y\vert x}$) . These measurements have at most 3 outcomes, and hence the capacity of $n$-gon theories is at most $\log 3$ bits. Here we prove stronger results:

\begin{thm}\label{thm1}
When $n$ is even, the classical capacity of $n$--gon theories is exactly 1bit.
\end{thm}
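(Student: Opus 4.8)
The plan is to prove the classical capacity of $n$-gon theories with $n$ even equals exactly 1 bit by establishing matching upper and lower bounds. The lower bound is immediate: since $n$ is even, the complement $u-e_i$ of any extremal effect is itself extremal, so there exists a genuine 2-outcome measurement $M_i=\{e_i,e_{i+n/2}\}$. By the saturation relations stated in the excerpt, one can choose two states $\omega_i$ and $\omega_{i+n/2}$ (antipodal on the polygon) that are perfectly distinguished by such a measurement: for an appropriate pairing one gets $\langle e_j,\omega_i\rangle\in\{0,1\}$, so Alice encodes one bit into these two states and Bob reads it off deterministically. This yields $I(X;Y)=1$ and hence capacity $\geq 1$.

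**The harder direction is the upper bound,** showing no encoding/measurement strategy beats 1 bit. The key reduction, already granted in the excerpt, is that Bob's optimal measurement can be taken to be canonical, i.e. of the form $M=\{\lambda_k e_k\}_{k=1}^3$ with at most 3 extremal effects summing to $u$. So I would first argue that it suffices to bound $I(X;Y)$ over all input distributions $p_x$, all state encodings $\omega(x)$ drawn from the $n$-gon, and all canonical 3-outcome measurements. The geometric crux is the following: for $n$ even, the weak self-duality (the effect cone is the state cone rotated by $\pi/n$) constrains how a canonical measurement partitions the state polygon. I would show that any 2-outcome canonical measurement corresponds to a linear functional $e$ with $0\leq\langle e,\omega\rangle\leq 1$ on the polygon, i.e. to a chord splitting the $n$-gon into two regions, and that the achievable conditional distributions $p_{y|x}$ form a noiseless binary channel only in the extremal antipodal case.

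**The main obstacle** will be ruling out that a genuine \emph{3}-outcome canonical measurement, possibly combined with a cleverly chosen multi-state encoding, exceeds 1 bit. My strategy here is to exploit that the three extremal effects of a canonical measurement sum to $u=(0,0,1)$, which forces their ``spatial'' ($x,y$) parts to sum to zero, so the three effect directions cannot all be near-parallel; I would then argue that the probabilities $\langle e_k,\omega(x)\rangle$ achievable on the polygon, subject to this constraint and to $0\le\langle e_k,\omega\rangle\le1$, cannot produce a channel matrix with mutual information above 1 bit. Concretely I would reduce to checking the extreme points of the convex set of channel matrices and invoke convexity of $I$ in $p_{y|x}$, showing every vertex of the achievable region gives at most one bit. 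The even-ness enters decisively through the position of the effects relative to the vertices (the saturation pattern with two distinct saturating states per effect), which I expect differs structurally from the odd case and is precisely what prevents the extra information that the odd theories will later be shown to carry.

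**An alternative, cleaner route** I would keep in reserve is a direct entropic argument: bound $I(X;Y)=H(Y)-H(Y|X)$ by showing that for $n$ even the symmetry group of the polygon and the self-duality force any measurement statistics to be reproducible by a classical bit channel, perhaps by explicitly exhibiting a factorization of every canonical measurement's statistics through a two-valued random variable. If such a factorization exists, the data processing inequality immediately caps the capacity at 1. I would try this symmetry-based factorization first, and fall back on the vertex-enumeration of channel matrices only if the clean argument does not close the bound.
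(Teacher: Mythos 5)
Your proposal is correct and follows essentially the same route as the paper: the lower bound via antipodal states and the two-outcome measurement $\{e_0,e_{n/2}\}$, and the upper bound via reduction to canonical three-effect measurements followed by vertex enumeration of the convex set of channel matrices using convexity of $I(X;Y)$ in $P(y\vert x)$, which is exactly the paper's Lemma \ref{lemvertex}, while your reserve route (factoring the statistics through effectively binary channels plus data processing) is the paper's direct proof in Appendix \ref{appdirect}. The one distillation you miss is that the paper discards all of the polygon geometry you plan to retain (weak self-duality, the spatial balance of the effect directions, the saturation pattern) and keeps only $0\le P(y\vert x)\le\lambda_y$ together with the scalar normalization $\sum_y\lambda_y=c_n$, with evenness entering solely through $c_n=2$ --- this relaxation is what makes the vertex enumeration tractable and simultaneously yields the odd-case bound of Theorem \ref{thm2}.
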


\begin{thm}\label{thm2}
When $n$ is odd, the classical capacity of $n$--gon theories is strictly larger than 1 bit, is equal to $\log 3$ bits when $n=3$, and tends towards $1$ bit when $n\to \infty$.
\end{thm}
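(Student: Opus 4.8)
The plan is to exploit the reduction, already available, to canonical three--outcome measurements $M=\{\lambda_y e_{j_y}\}$ and extremal input states $\omega_i$; by eqs.~(\ref{eq:state},\ref{eq:effect1}) the induced classical channel is $P(y\vert i)=\lambda_y\frac{1+r_n^2\cos(2\pi(i-j_y)/n)}{1+r_n^2}$, and $C_n$ is the maximum of $I(X;Y)$ over input distributions, weights $\lambda_y\ge 0$ and labels $j_y$ subject to $\sum_y\lambda_y e_{j_y}=u$. All three assertions will be read off from one explicit protocol together with a single limiting upper bound.

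For the lower bound I would use the reflection--symmetric protocol (invariant under $i\mapsto -i \bmod n$) with the three inputs $\omega_0,\omega_{(n-1)/2},\omega_{(n+1)/2}$ and the measurement $\{e_0,\lambda e_{(n-1)/2},\lambda e_{(n+1)/2}\}$, where $\lambda=\frac{1+r_n^2}{2(1+\cos(\pi/n))}$ is fixed by $\sum_y\lambda_y e_{j_y}=u$. Using the identity $r_n^2\cos(\pi/n)=1$ together with the saturation relations, the channel collapses to a transparent form: $\omega_0$ always produces the outcome $0$ and no other input does, while the inputs $\omega_{(n\mp 1)/2}$ are sent through a binary symmetric channel on the outcomes $\{+,-\}$ with crossover $1-\alpha$, where $\alpha=\lambda=\frac{1+\sec(\pi/n)}{2(1+\cos(\pi/n))}$. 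Optimising the symmetric prior $(p_0,\tfrac{1-p_0}{2},\tfrac{1-p_0}{2})$ then gives the closed form $C_n\ge\log_2\!\big(1+2^{\,1-H_2(\alpha)}\big)$, where $H_2$ denotes the binary entropy.

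This single formula yields all three facts on the lower side. Since $\sec(\pi/n)>1>\cos(\pi/n)$ for every finite $n$, one has $\alpha>\tfrac12$, hence $H_2(\alpha)<1$ and $C_n\ge\log_2(1+2^{\,1-H_2(\alpha)})>1$. At $n=3$ one computes $\alpha=1$, the binary symmetric channel is noiseless, and the bound equals $\log 3$; as this meets the upper bound $C_n\le\log 3$ recalled above, we get $C_3=\log 3$. As $n\to\infty$, $\sec(\pi/n),\cos(\pi/n)\to 1$, so $\alpha\to\tfrac12$, $H_2(\alpha)\to 1$, and the lower bound tends to $1$.

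It remains to prove the matching upper bound $\limsup_{n\to\infty}C_n\le 1$, which I expect to be the main obstacle. Here I would compare with the real qubit (rebit): since $r_n\to 1$, the overlap $\frac{1+r_n^2\cos\theta}{1+r_n^2}$ converges uniformly to $\cos^2(\theta/2)$, while the constraint $\lambda_y e_{j_y}\le u$ keeps all $\lambda_y$ in a fixed compact set, so every $n$--gon channel lies within $o(1)$ (uniformly over priors and measurements) of a genuine rebit channel --- an ensemble of real pure states read out by a POVM. Because the mutual information is continuous in the channel probabilities, and because the Holevo bound forces $I(X;Y)\le S(\bar\rho)\le\log 2=1$ for any rebit channel, we obtain $C_n\le 1+o(1)$; combined with the lower bound this gives $C_n\to 1$. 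The delicate point is exactly this uniformity: one must control the gap between the $n$--gon and rebit values of $I$ simultaneously over all admissible measurements and priors, and check that the limiting $n$--gon measurement is, up to an $o(1)$ correction, an admissible rebit POVM so that Holevo applies.
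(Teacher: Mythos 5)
Your proposal is correct, and on the achievability side it coincides with the paper's Protocol~2: the same three input states $\omega_0,\omega_{(n\pm1)/2}$ and the same reflection--symmetric measurement --- in fact your normalisation $\lambda=\frac{1+r_n^2}{2(1+\cos(\pi/n))}=\frac{r_n^2}{2}$ is the correct one (the coefficients as printed in Protocol~2 sum to $u/r_n^2$ rather than $u$, an apparent typo), and your closed form $\log_2\bigl(1+2^{1-H_2(\alpha)}\bigr)$ with $\alpha=r_n^2/2$, obtained by optimising the prior over the flag--plus--BSC structure, makes explicit the three facts (capacity $>1$; $\log 3$ at $n=3$ against the $\log 3$ three--outcome bound, exactly as the paper closes that case; and $\to 1$) that the paper only asserts by inspection. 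Where you genuinely diverge is the converse as $n\to\infty$: the paper relaxes the polygon channels to the linear constraints (\ref{E0})--(\ref{ENL}) with $c_n=1+r_n^2$, and enumerates the vertices of the resulting polytope (Lemma~\ref{lemvertex}, Appendix~\ref{app2}); by concavity of $I$ the capacity is attained at a vertex, and each vertex either suppresses an outcome (hence $I\le 1$) or has $\lambda_1=c_n-2\to 0$, forcing the capacity to $1$ bit. You instead approximate the $n$-gon channel by a rebit channel and invoke Holevo. The uniformity worry you flag does close: the overlap error is $O(r_n^2-1)=O(n^{-2})$ uniformly in the angles, the weights obey $\lambda_y\le c_n\le 3$, the near--POVM $\sum_y \lambda_y F_y = I+\Delta$ with $\|\Delta\|=O(n^{-2})$ is repaired exactly via $F_y\mapsto (I+\Delta)^{-1/2}F_y(I+\Delta)^{-1/2}$ at $O(n^{-2})$ cost in the probabilities, and since the output alphabet is fixed at three (and the input alphabet can be reduced to three by Lemma~\ref{lemthree}), $I(X;Y)$ is uniformly continuous in the transition matrix over all priors, so $C_n\le 1+o(1)$; Holevo applies to the rebit because real ensembles and real POVMs are instances of qubit ones, giving $I\le S(\bar\rho)\le 1$. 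Comparing the two converses: the paper's vertex enumeration is elementary, holds at every finite $n$, and delivers Theorem~\ref{thm1} for even $n$ from the same lemma; your route yields only the asymptotic statement (a finite-$n$ bound would require an explicit modulus of continuity), but it is conceptually transparent and generalises to any family of GPTs converging to a quantum model.
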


To prove these results, we first exhibit communication protocols that satisfy the capacities stated in the theorems.

\begin{prot}
($n$ even).  Alice has two inputs, $x=0,1$, that occur with equal probability $p_0=p_1=1/2$. On input $x=0$ she prepares state $\omega_0$, on input $x=1$ she prepares state $\omega_{n/2}$. Bob carries out the measurement $M=\{e_0, e_{n/2}\}$. 
\end{prot}
One easily checks using eqs. (\ref{satprob}) that in this case the capacity is $1$ bit.

\begin{prot} ($n$ odd). Alice has three inputs, $x=0,1,2$, that occur with  probability $p_0=1/2$, $p_1=p_2=1/4$. On input $x=0$ she prepares state $\omega_0$, on input $x=1$ she prepares state $\omega_{(n-1)/2}$, on input $x=2$ she prepares state $\omega_{(n+1)/2}$. Bob carries out the measurement $M=\{\frac{1}{r_n^2}e_0, \frac{1}{2}e_{(n-1)/2},\frac{1}{2}e_{(n+1)/2}\}$. 
\end{prot}
Using eqs. (\ref{satprob}) one checks that in this case the capacity is strictly larger than $1$ bit. When $n=3$ the capacity of this protocol is $\log 3$, and it decreases monotonically towards $1$ bit as $n$ tends to $\infty$.

We now turn to proving the converses.

\subsection{Upper bounds on the classical capacities.}

As noted above, without loss of generality we can take Bob's measurement $M=\{\lambda_y e_y, y=1,2,3\}$ to have 3 extremal effects, see eq. (\ref{canonicalmeast}). The normalisation condition $\sum_y \lambda_y e_y=u$ implies that $\sum_y \lambda_y=2$ ($n$ even) and  $\sum_y \lambda_y=1+r_n^2$ ($n$ odd). For future notation, we denote $\sum_y \lambda_y=c_n$, where $2\leq c_n\leq3$.
The probability of Bob obtaining outcome $y$ given state $\omega(x)$ is $P(y\vert x)= \lambda_y  \langle e_y ,\omega(x)\rangle \leq \lambda_y$.

The mutual information $I(Y;X)$  is concave in the conditional probabilities $P(y\vert x)$ (i.e. taking a convex combination of two channels can only decrease the capacity). Hence the capacity is maximal at the extreme points of the set of conditional probabilities $P(y\vert x)$. If we enlarge the space of possible channels, we can only increase the capacity.
In particular if we impose only the following conditions:
\begin{eqnarray} 
 P(y\vert x) &\geq& 0\label{E0}\\
P(y\vert x)&\leq& \lambda_y\label{EL}\\
\sum_y P(y\vert x)&=&1\label{EN}\\
\sum_y \lambda_y&=&c_n \label{ENL}
\end{eqnarray}
with arbitrary alphabet size and number of measurement outcomes equal to three $y=1,2,3$, then we include all channels realised by polygon theories (as well as some other channels).
Hence the capacity of the channels defined by  eqs. (\ref{E0}, \ref{EL}, \ref{EN}, \ref{ENL})  is at least as large as the capacity of  the polygon theories.

Concavity of $I(X;Y)$ implies that the capacity of the channels defined by 
eqs. (\ref{E0}, \ref{EL}, \ref{EN}, \ref{ENL})  describe (for fixed alphabet size $\vert X \vert$) a polytope whose vertices are the extreme points. Concavity of $I(X;Y)$ implies that the capacity of the corresponding channels   will  be maximum at a vertex of this polytope. It is therefore sufficient to determine the capacities of the channels defined by the vertices of this polytope.

\begin{lem}\label{lemvertex}
The vertices of the polytope defined by eqs.  (\ref{E0},\ref{EN},\ref{EL},\ref{ENL}) have the following properties: Either
at least one of the $\lambda_y=0$; or 
when $2<c_n\leq 3$, the vertices have (up to a permutation of the $y$'s) the form 
$\lambda_1=c_n-2$, $\lambda_2=\lambda_3=1$, and all inputs $x$  give rise to one of the following four output distributions:
\begin{eqnarray}\label{probdist1}
&P(1\vert x_1)=0 \quad , \quad P(2\vert x_1)=0 \quad , \quad P(3\vert x_1)=1&\nonumber\\
&P(1\vert x_2)=0\quad , \quad P(2\vert x_2)=1\quad , \quad P(3\vert x_2)=0&\nonumber\\
&P(1\vert x_3)=c_n-2\quad , \quad P(2\vert x_3)=0\quad , \quad P(3\vert x_3)=3-c_n\nonumber\\
&P(1\vert x_4)=c_n-2\quad , \quad P(2\vert x_4)=3-c_n\quad , \quad P(3\vert x_4)=0\ .
\end{eqnarray}
\end{lem}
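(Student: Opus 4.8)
The plan is to characterise the vertices through the elementary criterion that a feasible point is a vertex iff no nonzero direction $d=(\delta\lambda,\{\delta P(y\vert x)\})$ has both $\pm\epsilon d$ feasible for small $\epsilon>0$. Such a $d$ must preserve the equalities, so $\sum_y\delta\lambda_y=0$ and $\sum_y\delta P(y\vert x)=0$ for each $x$, and it must keep every active inequality active: $\delta P(y\vert x)=0$ where $P(y\vert x)=0$, and $\delta P(y\vert x)=\delta\lambda_y$ where $P(y\vert x)=\lambda_y$. The case in which some $\lambda_y=0$ is exactly the first alternative, so assume all $\lambda_y>0$. Fixing $\lambda$ and all columns but one, the feasible set of the remaining column is the polygon $\Delta_\lambda=\{P:0\le P_y\le\lambda_y,\ \sum_y P_y=1\}$; hence at a vertex each column must be an extreme point of $\Delta_\lambda$, having either two tight constraints (exactly one coordinate is then \emph{interior}, $0<P(y\vert x)<\lambda_y$) or three (no interior coordinate).

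Next I would pin down $\lambda$. A column with an interior coordinate imposes no constraint on $\delta\lambda$: for any $\delta\lambda$ with $\sum_y\delta\lambda_y=0$ one sets its tight entries to the forced values $0$ or $\delta\lambda_y$ and absorbs the residual into the interior coordinate, which stays interior for small $\epsilon$. So only the fully tight columns constrain $\delta\lambda$, through $\sum_{y:\,P(y\vert x)=\lambda_y}\delta\lambda_y=0$. Since $\sum_y\lambda_y=c_n>1$ forbids all three entries sitting at their upper bounds, a fully tight column is either single-$H$ (one entry at $\lambda_y$, which forces $\lambda_y=1$ on that coordinate and gives $\delta\lambda_y=0$) or double-$H$ (two entries at their bounds summing to $1$ and one entry $0$, which forces $\lambda_y=c_n-1$ on the zero coordinate and again gives $\delta\lambda_y=0$ there). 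In both cases the column pins a single coordinate of $\delta\lambda$.

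The point is then a vertex iff these constraints force $\delta\lambda=0$, i.e. pin two distinct coordinates $\delta\lambda_j=\delta\lambda_k=0$ in the two-dimensional space $\{\sum_y\delta\lambda_y=0\}$. This is where $\lambda_y>0$ and $c_n>2$ are used: coordinate $j$ pinned by a single-$H$ column means $\lambda_j=1$, by a double-$H$ column means $\lambda_j=c_n-1$, and likewise for $k$. Running through the three combinations, a single-$H$/double-$H$ pair forces the third $\lambda$ to equal $0$ and a double-$H$/double-$H$ pair forces it to equal $2-c_n\le 0$, both violating positivity; only two single-$H$ columns survive, giving $\lambda_j=\lambda_k=1$ and third coordinate $c_n-2>0$, positive exactly when $c_n>2$. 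This is the stated form $\lambda=(c_n-2,1,1)$ up to permutation. With this $\lambda$ fixed I would finally enumerate the extreme points of $\Delta_\lambda$ by taking all pairs of tight constraints and keeping the feasible ones; exactly the four distributions $(0,0,1)$, $(0,1,0)$, $(c_n-2,0,3-c_n)$, $(c_n-2,3-c_n,0)$ remain, which are those listed. I expect the coordinate-pinning case analysis to be the main obstacle, as it is the only step that genuinely uses positivity of all $\lambda_y$ together with $c_n>2$ to eliminate the double-$H$ configurations and isolate $\lambda=(c_n-2,1,1)$; the two reductions preceding it and the planar vertex enumeration following it are routine.
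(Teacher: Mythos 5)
Your proposal is correct and follows essentially the same route as the paper: your interior-coordinate columns are the paper's cases 2a--c, your single-$H$ and double-$H$ fully tight columns are its cases 3a and 3d (with the $\lambda_y=0$ saturation patterns 3b,c and 4a,b set aside at the start, as you do), and your positivity case analysis eliminating the single/double and double/double pin combinations to isolate $\lambda=(c_n-2,1,1)$ mirrors the paper's argument that case 3a must be invoked for two distinct coordinates when $c_n>2$. The only difference is one of packaging --- you use the no-feasible-direction characterization of vertices and the fiber polygon $\Delta_\lambda$, where the paper counts $2\vert X\vert+2$ saturated inequalities --- and your version is, if anything, slightly tidier on the permutation bookkeeping and on why only fully tight columns constrain $\delta\lambda$.
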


Note that Lemma \ref{lemvertex} immediately implies the upper bounds stated in Theorems \ref{thm1} and \ref{thm2}, since for $c_n=2$ it implies that the capacity of the channel is at most $1$ bit (since at least one outcome never occurs), and that for $3\geq c_n>2$, the vertices either have capacity less or equal to $1$ bit, or tend towards a channel with capacity $1$ bit as $c_n\to2$.

The proof of lemma \ref{lemvertex} is given in appendix \ref{app2}. An alternative proof that the capacity of the $n$-gon theories with $n$ even is bounded by 1 bit is given in appendix \ref{appdirect}.

\subsection{Optimal coding for Alice}

We also note that it is possible to simplify Alice's coding.

\begin{lem}\label{lemthree}
The maximal information capacity of polygon theories is obtained when
Alice's alphabet has size 3, and each of the inputs is extremal $\omega(x)=\omega_{i(x)}$, $x=1,2,3$.
\end{lem}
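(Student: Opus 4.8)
The plan is to establish the two assertions separately: first that every input state may be taken extremal, and second that at most three inputs are needed. Throughout I fix Bob's measurement to be a canonical measurement with three outcomes $y=1,2,3$ as in eq.~(\ref{canonicalmeast}); this is allowed by the reduction already made, and since the argument below works for each such measurement, the conclusion holds for the capacity itself. For a fixed measurement each input state $\omega$ produces the output distribution $P(\cdot\vert\omega)$ with $P(y\vert\omega)=\lambda_y\langle e_y,\omega\rangle$, which is an affine function of $\omega$ on the normalised state space; hence as $\omega$ ranges over the state polygon the achievable output distributions form a convex polytope inside the two--dimensional simplex $\Delta_3$ of distributions on three outcomes, whose extreme points are images of the vertices $\omega_i$.

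First I would show that the optimum uses only extremal inputs. Write the mutual information as $I(X;Y)=H(\bar P)-\sum_x p_x H\big(P(\cdot\vert\omega(x))\big)$, where $\bar P=\sum_x p_x P(\cdot\vert\omega(x))$ is the average output distribution. Suppose some input satisfies $\omega(x_0)=\sum_j q_j\,\omega_{i_j}$, a proper convex combination of vertices. Replace the symbol $x_0$, of weight $p_{x_0}$, by several symbols of weights $p_{x_0}q_j$ emitting the extremal states $\omega_{i_j}$. This refinement leaves $\bar P$, hence $H(\bar P)$, unchanged, while by concavity of the Shannon entropy $\sum_j q_j H\big(P(\cdot\vert\omega_{i_j})\big)\le H\big(\sum_j q_j P(\cdot\vert\omega_{i_j})\big)=H\big(P(\cdot\vert\omega(x_0))\big)$, so the subtracted term $\sum_x p_x H(\cdot)$ can only decrease. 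Thus the refinement does not decrease $I(X;Y)$, and iterating it we may assume that every input is one of the extremal states $\omega_i$.

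Second I would reduce the alphabet to size three. After the first step the problem is the classical capacity of a discrete memoryless channel whose output alphabet has size three, and I would invoke the standard fact (Carath\'eodory together with the Kuhn--Tucker optimality conditions for channel capacity) that such a channel admits a capacity--achieving input distribution supported on at most three letters. Concretely, at an optimal input distribution $p$ with average output $\bar P$, the optimality conditions give $D\big(P(\cdot\vert\omega_i)\,\Vert\,\bar P\big)=C$ for every $i$ in the support and $I(X;Y)=C$. Since $\bar P$ lies in the convex hull of the support points, all of which lie in the two--dimensional simplex $\Delta_3$, Carath\'eodory's theorem lets me write $\bar P$ as a convex combination of at most three of the output distributions $P(\cdot\vert\omega_{i_k})$ of extremal states in the support, with weights $q_k$, $k=1,2,3$. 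The new ensemble $\{q_k,\omega_{i_k}\}$ reproduces the same $\bar P$ and satisfies $I(X;Y)=\sum_k q_k D\big(P(\cdot\vert\omega_{i_k})\,\Vert\,\bar P\big)=\sum_k q_k C=C$, so it still achieves capacity while using three extremal inputs.

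The easy part is the extremality reduction, which is just concavity of entropy applied to a refinement of the input. The main obstacle is the support--size bound: one must justify the Kuhn--Tucker characterisation of the optimal distribution (that $D\big(P(\cdot\vert\omega_i)\,\Vert\,\bar P\big)$ is constant and equal to the capacity on the support) before the Carath\'eodory dimension count applies. Once that standard characterisation is in hand the count is immediate, because the output simplex $\Delta_3$ is only two--dimensional and hence every point of it is a convex combination of at most three of the available extremal output distributions.
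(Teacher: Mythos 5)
Your proof is correct, and its second half takes a genuinely different route from the paper's. Your extremality step (refining a mixed input into its extremal components, which leaves the average output $\bar P$ unchanged and, by concavity of the entropy, can only increase $I$) is in substance the same as the paper's first step, which phrases it via the chain rule $I(Y;XI)=I(Y;X)+I(Y;I\vert X)$. For the alphabet reduction, however, the paper uses no optimality conditions at all: it applies Carath\'eodory in the \emph{state} space (the normalised state polygon is $2$-dimensional, so the average state $\omega=\sum_i p_i\omega_i$ is a convex combination of at most three extremal states) and then recursively splits the entire input ensemble into sub-ensembles of at most three letters, each with the \emph{same} barycenter $\omega$. Since every block then induces the same output distribution, $Y$ is independent of the block label $K$, and the chain rule gives $I(Y;JK)=I(Y;J\vert K)=\sum_k q_k I(Y;J\vert K{=}k)$, so some single block — a $3$-letter extremal channel — has capacity at least $I(Y;JK)\geq I(Y;J)$, the capacity of the original channel. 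Your route instead applies Carath\'eodory in the \emph{output} simplex to the capacity-achieving output distribution, combined with the Kuhn--Tucker characterisation $D\big(P(\cdot\vert\omega_i)\,\Vert\,\bar P\big)=C$ on the support; this is the textbook fact that a channel needs no more input letters than output letters. Both arguments are sound. Yours imports a standard but nontrivial piece of channel-coding theory (existence of a maximiser, which holds here by finiteness of the extremal-input alphabet, plus the KKT conditions you correctly flag as the step requiring justification) and operates only at an optimiser; the paper's argument is self-contained, avoids optimality conditions entirely, and establishes the slightly stronger pointwise statement that \emph{every} ensemble is dominated by some $3$-letter extremal ensemble. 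The two dimension counts are pleasingly dual: you exploit that the output simplex $\Delta_3$ is $2$-dimensional, the paper that the state polygon is.
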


The proof of lemma \ref{lemthree} is given in appendix \ref{app3}.

\section{Random access coding and information causality}

We consider the task in which Alice receives uniformly random inputs of $m$ bits, $x_1x_2...x_m$ and Bob receives as input a random index $j\in\{1,2,...,m\}$. Bob's aim is to output a bit $y$  that coincides  with $x_j$. 
Depending on the context, this task goes under the name
random access coding (RAC) \cite{ANTV02}, communication complexity of the index function \cite{BKLS14}, or information causality (IC)\cite{Pawlowski09} .

Here, following \cite{Pawlowski09}, we shall measure the success of the protocol  by
 the average information
\begin{equation} \label{eq:avgInf}
\bar{I} = \sum_jp_jI_j = \sum_j p_jI(X:Y\; |j)
\end{equation}
where $I_j=I(X:Y\; |j)$ is the conditional mutual information between Alice and Bob given that his input is $j$ and $\{p_j\}$ is the probability distribution over Bob's input. 
Note that Fano's inequality \cite{CoverThomas} implies a relation between $I_j$ and the probability $P_j^{succ}$ of Bob successfully decoding Alice's $j$th input through $H(P_j) \geq 1- I_j$.

As shown in \cite{Pawlowski09}, if Alice sends Bob $c$ classical bits, or $c$ qubits, then $I\leq c$ (even if Alice and Bob have shared randomness or shared entanglement). 
It can also be shown \cite{IC2} by adapting slightly  the proof in \cite{Pawlowski09} that if Alice and Bob do not have prior shared entanglement, and Alice sends Bob $q$ quantum bits, then $I\leq q$. The essential property  used in these proofs is that the data processing inequality is valid both for classical \cite{CoverThomas} and quantum information theories \cite{NielsenChuang10}.

The idea that $I$ should be less than the classical capacity of the channel between Alice and Bob is known as information causality. Information causality does not hold in all GPTs, for instance if Alice and Bob share correlations more non local than quantum, and Alice sends Bob classical information. 

Here we consider  information causality in the context of $n$--gon theories (with $n$ even).  Specifically we suppose that Alice and Bob have as prior resource shared randomness. We consider the case $m=2$: Alice receives two bits as input and sends Bob a  state $\omega(x_1,x_2)$. Bob receives as input an index $j\in\{1,2\}$. The aim is for Bob's output to maximize the quantity eq. (\ref{eq:avgInf}). 

Since the classical capacity of $n$--gon theories with $n$ even is 1 bit (theorem \ref{thm1}) one would expect that $I=1$. We show here that this intuition is wrong.

\begin{thm}\label{thm3}
When $n$ is even, sending a single state of $n$-gon theory, achieves  $I>1$.
\end{thm}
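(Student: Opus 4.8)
The plan is to prove Theorem \ref{thm3} constructively: I will exhibit an explicit encoding together with two measurements for Bob and show that the resulting value of $\bar I=I_1+I_2$ exceeds the bound $1$ that information causality imposes for a channel whose classical capacity is $1$ bit (Theorem \ref{thm1}). The guiding idea is that for even $n$ one two-outcome measurement can read one of Alice's bits \emph{perfectly}, while a second, transverse measurement still extracts a strictly positive amount of information about the other bit; the sum of the two contributions then beats $1$ even though each conditional channel individually respects the $1$-bit capacity. Shared randomness is needed only to symmetrize the two roles (exchanging which bit is read perfectly), and does not affect the value of the sum.

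First I would fix an effect direction $\phi_a$ and let $M_1=\{e_a,e_{a+n/2}\}$ be the associated two-outcome measurement, which exists since $n$ is even. I encode the four inputs into the two pairs of vertices flanking $\phi_a$ and its antipode: the vertices $\omega_{a-1},\omega_a$ adjacent to $e_a$ carry $x_1=0$, and the antipodal vertices $\omega_{a+n/2-1},\omega_{a+n/2}$ carry $x_1=1$. By the saturation relations eqs.~(\ref{satprob}), $\langle e_a,\omega_{a-1}\rangle=\langle e_a,\omega_a\rangle=1$ and $\langle e_a,\omega_{a+n/2-1}\rangle=\langle e_a,\omega_{a+n/2}\rangle=0$, so $M_1$ identifies $x_1$ with certainty, giving $I_1=1$. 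The decisive step is the assignment of the second bit: I pair the four vertices \emph{crosswise}, placing $\omega_{a-1}$ together with $\omega_{a+n/2}$ on $x_2=0$, and $\omega_a$ together with $\omega_{a+n/2-1}$ on $x_2=1$.

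Second I would read $x_2$ with the transverse measurement $M_2=\{e_b,e_{b+n/2}\}$ whose direction is nearest $\phi_a+\pi/2$; when $4\mid n$ one takes $b=a+n/4$, so $\phi_b=\phi_a+\pi/2$ exactly. Using $\langle e,\omega\rangle=\tfrac12\bigl(1+\sec(\pi/n)\cos(\theta-\phi)\bigr)$ and the fact that the two same-$x_2$ vertices sit symmetrically about the $M_2$ axis (at angles $\pm\pi/n$ and $\pi\pm\pi/n$ relative to $\phi_a$), a short computation yields $P(e_b\mid x_2)=\tfrac12\bigl(1\pm\sec(\pi/n)\sin(\pi/n)\bigr)=\tfrac12\bigl(1\pm\tan(\pi/n)\bigr)$, \emph{independent of} $x_1$. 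Hence $M_2$ is a clean binary symmetric channel of bias $\tan(\pi/n)\le1$, so $I_2=1-H\!\bigl(\tfrac12(1+\tan(\pi/n))\bigr)$. Adding the two contributions gives $\bar I=2-H\!\bigl(\tfrac12(1+\tan(\pi/n))\bigr)$, which is strictly greater than $1$ for every finite even $n$ since $\tan(\pi/n)>0$; it equals $2$ at $n=4$ and decreases to $1$ from above as $n\to\infty$, matching the statement.

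The point at which I expect to have to argue carefully — the main obstacle — is precisely the crosswise pairing. The naive pairing that groups antipodal vertices would make the two $x_2=0$ states exactly antipodal, so their uniform mixture is the centre of the polygon and \emph{every} effect returns the same probability; this forces $I_2=0$ and only $\bar I=1$, reproducing the naive expectation. The crosswise pairing instead offsets the same-$x_2$ vertices from antipodal by one step $2\pi/n$, endowing their centroid with a nonzero transverse component that $M_2$ detects, and the real work is checking that this component is read as a genuine symmetric channel with \emph{no} leakage of $x_1$ (so that $I_2$ is computed exactly as above). A secondary, purely bookkeeping issue is the case $n\equiv2\pmod4$, where no effect lies exactly at $\phi_a+\pi/2$: there one uses the nearest available effect, which is off by $\pi/n$, and one must verify that the induced bias, while no longer equal to $\tan(\pi/n)$, remains strictly positive, so that $I_2>0$ and the conclusion $\bar I>1$ persists for all even $n$.
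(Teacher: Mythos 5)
Your proposal is correct and follows the same overall strategy as the paper: exhibit an explicit protocol in which the even-$n$ two-outcome extremal measurement reads one bit perfectly ($I_1=1$) while a second measurement extracts strictly positive information about the other bit, so that $I=I_1+I_2>1$. The concrete realization differs, though. The paper encodes into $\omega_{x_0 n/2+x_1+x_0x_1}$ and has Bob use the two \emph{adjacent} measurements $M_0=\{e_0,e_{n/2}\}$ and $M_1=\{e_1,e_{n/2+1}\}$ (offset by $2\pi/n$), yielding for the second bit an asymmetric channel with average success $P_n=1-\cos(2\pi/n)/2$, i.e.\ bias of order $1/n^2$; crucially this works uniformly for all even $n$ with no case distinction. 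Your crosswise pairing read by the \emph{transverse} measurement at $\phi_a+\pi/2$ instead gives a clean binary symmetric channel with bias $\tan(\pi/n)$, which is asymptotically stronger (order $1/n$) and makes $I_2=1-H\bigl(\tfrac12(1+\tan(\pi/n))\bigr)$ exactly computable — but only when $4\mid n$, which is the price of insisting on exact orthogonality. Your flagged loose end for $n\equiv 2\pmod 4$ does close, and it is worth recording how: taking $b=a+(n+2)/4$ puts the effect direction at $\phi_a+\pi/2+\pi/n$, the same-$x_2$ pairs are then no longer symmetric about the measurement axis, so the channel is no longer $x_1$-independent — one member of each pair gives outcome probability exactly $\tfrac12$ and the other $\tfrac12\bigl(1\pm 2\sin(\pi/n)\bigr)$ (using $\sec(\pi/n)\sin(2\pi/n)=2\sin(\pi/n)$), so averaging over uniform $x_1$ gives success probability $\tfrac12\bigl(1+\sin(\pi/n)\bigr)>\tfrac12$, hence $I_2>0$; and since $I_2=I(X{:}Y|j{=}2)$ is mutual information with the full input $X$, the residual $x_1$-dependence can only increase it, so your claim of ``no leakage'' is needed only in the $4\mid n$ case where it does hold exactly. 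In short: same proof architecture as the paper, a different and in some respects sharper protocol, at the cost of a two-case analysis the paper's single encoding avoids.
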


We conjecture that the maximum achievable value of $I$ decreases monotonically  from $I=2$ when $n=4$  to $I=1$ when $n=\infty$ (because the limiting cases ($n=4$ and $n=\infty$) are known and the explicit protocol exhibits this monotonicity).

\begin{prot}
($n$ even).  Alice receives two bits $x_0$ and $x_1$ as input. She prepares the state
$\omega_{x_0 n/2+x_1+x_0x_1}$. Bob measures in the basis $M_1 =\{e_1,e_{n/2+1}\}$ and $M_0 =\{e_0,e_{n/2}\}$ on his inputs 0 and 1 respectively.
\end{prot}

Using this protocol, when $j=0$, Bob learns the value of $x_0$ with certainty ($P(y=x_0\vert j=0)=1$); while when $j=1$ 
Bob's success probability  is easily computed to be 
$P_n = 1 - \cos{(2\pi/n)}/2 > 1/2$. Therefore, $I > 1$.

\section{Nondeterministic Not Equal Function}

We recall a result from classical communication complexity (see \cite{KN98} for a review of the field).
Consider the NOT-EQUAL function, 
$F_{NE}: \{0,1\}^k\times\{0,1\}^k\to \{0,1\}$ defined as
\begin{eqnarray}
F_{NE}(x,y)&=& 1\mbox{ if } x\neq y\nonumber\\
& = & 0\mbox{ if } x=y
\end{eqnarray}
Suppose that Alice and Bob are given $x$ and $y$, respectively,
as inputs and their goal is to evaluate $F_{NE}(x,y)$ in
the following weak sense. Bob should output a bit $b$ that is
distributed so that if $F_{NE}(x,y)=0$, then $Pr[b=1]=0$; whereas if
$F_{NE}(x,y)=1$, then $Pr[b=1]>0$. In addition, assume that Alice
and Bob have no prior shared randomness. Such a
protocol can be regarded as a nondeterministic
protocol for the $F_{NE}$ function. 
The following lower bound on the amount of classical communication required by Alice and Bob to achieve this is
 well known (see \cite{KN98}):

\begin{lem} Any nondeterministic classical protocol for
computing the function $F_{NE}$ requires at least $log_2(k)$ bits of
communication.
\end{lem}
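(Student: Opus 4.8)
The plan is to use the standard correspondence between nondeterministic communication complexity and covers of the $1$-entries of the communication matrix by monochromatic rectangles, and then to lower bound this cover number by a short counting argument.

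First I would recall the rectangle structure of protocols. Fix any nondeterministic protocol using $c$ bits of communication and consider its transcripts. The set of inputs $(x,y)$ consistent with a fixed transcript is a combinatorial rectangle $A\times B$, with $A\subseteq\{0,1\}^k$ (Alice's side) and $B\subseteq\{0,1\}^k$ (Bob's side), because each message depends only on one party's input together with the shared history. Let $R_1=A_1\times B_1,\ldots,R_t=A_t\times B_t$ be the rectangles associated with the \emph{accepting} transcripts, i.e.\ those on which Bob may output $b=1$. Since there are at most $2^c$ transcripts, $t\le 2^c$. Two facts then follow from the promise defining the nondeterministic $F_{NE}$ protocol. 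First, each accepting rectangle contains only $1$-inputs: if a diagonal $0$-input $(x,x)$ lay in some $R_i$, that accepting transcript would be possible on input $(x,x)$, forcing $\Pr[b=1]>0$ even though $F_{NE}(x,x)=0$; hence $A_i\cap B_i=\emptyset$. Second, every $1$-input is covered: for each $(x,y)$ with $x\neq y$ we have $\Pr[b=1]>0$, so some accepting transcript occurs with positive probability and $(x,y)$ lies in its rectangle. The hypothesis that Alice and Bob share no randomness is exactly what lets each accepting computation be identified with a single rectangle over all inputs simultaneously.

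The heart of the argument is then a lower bound $t\ge k$ on the number of rectangles $A_i\times B_i$ with $A_i\cap B_i=\emptyset$ needed to cover all ordered pairs $(x,y)$ with $x\neq y$. To each string $v\in\{0,1\}^k$ I would assign its ``$A$-signature'' $S(v)=\{\, i : v\in A_i\,\}\subseteq\{1,\ldots,t\}$, and claim these $2^k$ signatures are pairwise distinct. Indeed, suppose $S(u)=S(v)$ for some $u\neq v$. Covering the ordered pair $(u,v)$ requires an index $i$ with $u\in A_i$ and $v\in B_i$; but $u\in A_i$ means $i\in S(u)=S(v)$, so $v\in A_i$, contradicting $A_i\cap B_i=\emptyset$. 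Since $\{1,\ldots,t\}$ has only $2^t$ subsets, distinctness of the $2^k$ signatures gives $2^t\ge 2^k$, i.e.\ $t\ge k$. This is precisely the statement that covering the edges of the complete graph on $2^k$ vertices by bicliques requires at least $\log_2(2^k)=k$ of them, specialised to the disjoint-part, directed setting arising here.

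Combining the two parts yields $2^c\ge t\ge k$, hence $c\ge\log_2 k$, as claimed. I expect the main conceptual obstacle to be the rectangle-cover correspondence in the nondeterministic model, and in particular justifying the two promise-based facts cleanly and pinning down where the no-shared-randomness assumption is used; by contrast the counting step, once phrased through the signatures $S(v)$, is short and is the crux that produces the logarithm.
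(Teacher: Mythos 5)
Your proof is correct: the paper itself gives no argument for this lemma, citing it as well known from Kushilevitz--Nisan, and your rectangle-cover reduction combined with the signature-counting bound (that covering all off-diagonal pairs by rectangles $A_i\times B_i$ with $A_i\cap B_i=\emptyset$ forces $2^t\ge 2^k$) is precisely the standard proof in that reference. You also correctly locate the role of the no-shared-randomness hypothesis (with shared randomness the transcript--rectangle correspondence with at most $2^c$ rectangles breaks down, and indeed $F_{NE}$ then becomes cheap), so there is nothing to add beyond the routine bookkeeping of folding Bob's acceptance decision into the definition of the $B_i$.
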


On the other hand we recall the result \cite{MBCC01}:
\begin{lem} There exists a nondeterministic quantum protocol for
computing the function $F_{NE}$ that uses one qubit of
communication.
\end{lem}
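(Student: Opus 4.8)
The plan is to give an explicit one-qubit protocol with one-sided error, exploiting that in the nondeterministic setting one only needs $\Pr[b=1]$ to be strictly positive --- not bounded away from zero --- when $x\neq y$. First I would fix an injective encoding of Alice's input into a single (real) qubit: identifying $x\in\{0,1\}^k$ with an integer $0\le x<2^k$, set $\theta_x=\pi x/2^k$ and let Alice's message be the pure state
$$\ket{\psi_x}=\cos\theta_x\,\ket{0}+\sin\theta_x\,\ket{1}.$$
Since the angles $\theta_x$ all lie in $[0,\pi)$ and are pairwise distinct, the $\ket{\psi_x}$ are $2^k$ distinct rays in the Bloch circle; in particular $\bigl|\langle\psi_x|\psi_y\rangle\bigr|=\bigl|\cos(\theta_x-\theta_y)\bigr|$ equals $1$ if and only if $x=y$.

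Next, Alice sends the single qubit $\ket{\psi_x}$ to Bob. On input $y$, Bob performs the projective measurement $\{\pj{\psi_y},\,\mathbb{1}-\pj{\psi_y}\}$ and outputs $b=1$ precisely when he obtains the second (orthogonal) outcome. Then
$$\Pr[b=1\mid x,y]=\langle\psi_x|\,(\mathbb{1}-\pj{\psi_y})\,|\psi_x\rangle=1-\bigl|\langle\psi_x|\psi_y\rangle\bigr|^2 .$$
By the encoding above this is exactly $0$ when $x=y$ and strictly positive when $x\neq y$, which is precisely the nondeterministic specification of $F_{NE}$. The protocol uses no shared randomness and a single qubit of communication, so it meets the requirements of the statement; note also that the states used are real, matching the ``real quantum bit'' that is the $n\to\infty$ limit of the polygon theories.

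I expect the real content to lie not in any calculation but in the observation that nondeterminism tolerates an arbitrarily small success probability. Indeed $1-\bigl|\langle\psi_x|\psi_y\rangle\bigr|^2=\sin^2(\theta_x-\theta_y)$ can be as small as $\sin^2(\pi/2^k)$, which is exponentially small in $k$ when $x$ and $y$ map to adjacent angles; yet this is irrelevant, since only $\Pr[b=1]>0$ is required. This is exactly the feature that lets a single qubit beat the $\log_2 k$ classical lower bound of the preceding lemma. The only point to verify carefully is injectivity of the encoding --- that $\bigl|\langle\psi_x|\psi_y\rangle\bigr|<1$ strictly for $x\neq y$ --- which follows at once from the distinctness of the angles $\theta_x$ modulo $\pi$.
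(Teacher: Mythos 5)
Your protocol is correct, and it is essentially the same construction as in the paper's cited source \cite{MBCC01} (the paper itself only recalls the lemma by reference): encode $x$ as the real-qubit state at angle $\pi x/2^k$, project onto the complement of $\ket{\psi_y}$, and output $1$ on the orthogonal outcome, giving acceptance probability $\sin^2(\theta_x-\theta_y)$, which vanishes iff $x=y$. Your observation that nondeterminism tolerates the exponentially small $\sin^2(\pi/2^k)$ success probability is exactly the point of the original argument, so nothing is missing.
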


We now show how polygon theories can be used to compute nondeterministically the NE function.

\begin{lem} By sending a single state of $n$-gon theory, one can achieve a nondeterministic protocol for computing the function $F_{NE}: \{0,1,...,n-1\}\times\{0,1,...,n-1\}\to \{0,1\}$.
\end{lem}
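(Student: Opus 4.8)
The plan is to give an explicit protocol: on input $x$ Alice sends the extremal state $\omega_x$, and on input $y$ Bob performs a two-outcome measurement $\{f_y,\,u-f_y\}$, declaring $b=1$ precisely when the effect $f_y$ fires. Then $\Pr[b=1]=\langle f_y,\omega_x\rangle$, so the nondeterministic requirement reduces to a purely geometric condition on $f_y$: we need $\langle f_y,\omega_y\rangle=0$ (so $b=1$ never occurs when $x=y$) and $\langle f_y,\omega_x\rangle>0$ for every $x\neq y$ (so $b=1$ occurs with nonzero probability whenever $x\neq y$). Equivalently, the zero set of $f_y$ must touch the polygon of states at the single vertex $\omega_y$ and nowhere else. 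Note that this protocol uses no shared randomness, as required.

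The key observation is that each extremal effect $e_j$ vanishes on exactly two adjacent vertices and is strictly positive on all the others; this follows directly from the saturation relations (\ref{satprob}) (equivalently, from the fact that $\langle e_j,\omega_i\rangle$ is an affine function of a cosine that attains its allowed minimum at precisely those two indices). For a fixed $y$ there are two extremal effects sharing the common zero $\omega_y$: one whose zero set is the edge $\{\omega_{y-1},\omega_y\}$ and one whose zero set is the edge $\{\omega_y,\omega_{y+1}\}$. Reading the indices off (\ref{satprob}), for $n$ even these are $e_{y+n/2}$ and $e_{y+n/2+1}$ (indices mod $n$), and for $n$ odd they are the analogous pair fixed by the odd-case zero conditions. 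I would take $f_y$ to be a strictly positive combination of these two effects, rescaled so that $\langle f_y,\omega\rangle\leq 1$ on the whole polygon. Since $f_y\in C^*_n$ as a positive combination of effects, this scaling guarantees $u-f_y\in C^*_n$, so $\{f_y,u-f_y\}$ is a legitimate measurement.

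It then remains to verify the strict-positivity condition, which is the main point. At $\omega_y$ both summands vanish, giving $\langle f_y,\omega_y\rangle=0$ automatically. At $\omega_{y-1}$ only the first summand vanishes while the second is strictly positive, and symmetrically at $\omega_{y+1}$; at every remaining vertex both summands are strictly positive. Hence $f_y$ vanishes only at $\omega_y$. The hard part is exactly this collapse of the zero locus to a single vertex: geometrically it asserts that a regular polygon admits a supporting line meeting it only at a chosen vertex, and the two-effect construction is a concrete way to realize such a line inside the dual cone $C^*_n$. Once the correct pair of extremal effects is identified from (\ref{satprob}), the verification is immediate and works uniformly for both parities of $n$.
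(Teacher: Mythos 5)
Your proof is correct, and in the even case it is genuinely different from --- in fact a repair of --- the paper's construction. The paper's protocol likewise has Alice send $\omega_x$, but for $n$ even Bob measures $M_y=\{e_y,e_{y+n/2}\}$ and outputs $1$ on $e_{y+n/2}$. By the paper's own saturation relations (\ref{satprob}), $\langle e_y,\omega_{y-1}\rangle=1$, i.e.\ the output-$1$ effect $e_{y+n/2}$ vanishes on the whole edge $\{\omega_{y-1},\omega_y\}$, so on the adjacent input $x=y-1\neq y$ Bob outputs $1$ with probability $0$: as literally stated, the paper's even-$n$ protocol meets the nondeterminism requirement only after thinning the domain to pairwise nonadjacent indices (size $n/2$, which still yields the $\log_2\log_2 n$ lower bound up to an additive constant). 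Your smoothed effect $f_y=\frac{1}{2}\bigl(e_{y+n/2}+e_{y+n/2+1}\bigr)$ collapses the zero set to the single vertex $\omega_y$ and therefore works on the full domain $\{0,\dots,n-1\}$; the measurement $\{f_y,u-f_y\}$ is legitimate for exactly the reason you give, since $\langle f_y,\omega\rangle\leq 1$ on all normalised states and $C_n^*$ consists precisely of the functionals nonnegative on $C_n$. (Operationally, your $f_y$ amounts to Bob locally randomising between the paper's $M_y$ and $M_{y+1}$, which is permitted because only \emph{shared} randomness is forbidden; the price of your fix is that Bob's effects are no longer extremal, which is harmless here.) For odd $n$ your construction coincides with the paper's: grouping the two output-$1$ outcomes of its three-outcome measurement gives exactly your $f_y\propto e_{y+(n-1)/2}+e_{y+(n+1)/2}$ --- modulo a normalisation slip in the paper, whose coefficients $\{\frac{1}{r_n^2},\frac{1}{2},\frac{1}{2}\}$ make the effects sum to $u/r_n^2$ rather than $u$; the correctly normalised measurement is $\{e_y,\frac{r_n^2}{2}e_{y+(n-1)/2},\frac{r_n^2}{2}e_{y+(n+1)/2}\}$, a point that does not affect your two-outcome version.
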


This is achieved by the following protocol.

\begin{prot}
Upon receiving input $x\in \{0,1,...,n-1\}$, Alice sends Bob state $\omega_x$.
Upon receiving input $y\in \{0,1,...,n-1\}$, Bob carries out measurement
$M_y=\{e_y,e_{y+n/2}\}$ ($n$ even) or 
$M_y=\{\{\frac{1}{r_n^2}e_y, \frac{1}{2}e_{y+(n-1)/2},\frac{1}{2}e_{y+(n+1)/2}\}$ ($n$ odd).
Bob outputs $0$ if he obtains outcome $e_y$, and outputs $1$ if he obtains one of the other outcomes.
\end{prot}

One easily checks, using eq. (\ref{satprob}) that this is a nondeterministic protocol for the NE function.

This implies the following lower bound on the classical cost of simulating polygon theories:
\begin{thm} If Alice and Bob do not have a priori shared randomness, then simulating the communication of a single state of $n$-gon theory requires $\log_2 (\log_2 n)$ bits of classical communication.
\end{thm}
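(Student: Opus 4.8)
The plan is to reduce the classical simulation cost to the communication complexity lower bound for the nondeterministic $F_{NE}$ function, which the preceding lemma establishes at $\log_2 k$ bits for inputs in $\{0,1\}^k$. The key observation is that the $n$-gon protocol exhibited just above computes $F_{NE}$ nondeterministically on the \emph{larger} alphabet $\{0,1,\dots,n-1\}\times\{0,1,\dots,n-1\}$ by sending just a single $n$-gon state. So if an arbitrary classical protocol using $c$ bits of communication (and no shared randomness) could faithfully simulate the transmission of one $n$-gon state, then composing that classical simulation with the $n$-gon $F_{NE}$ protocol would yield a $c$-bit \emph{classical} nondeterministic protocol for $F_{NE}$ on an alphabet of size $n$.

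First I would make precise what ``simulating the communication of a single $n$-gon state'' means: Alice holds a classical description of a state $\omega(x)$, and using $c$ bits (no shared randomness) the two parties must reproduce, for every state Alice could send and every canonical measurement Bob could perform, the correct outcome statistics $P(y\vert x)=\langle f_y,\omega(x)\rangle$. In particular the simulation must reproduce the statistics of precisely the encoding $x\mapsto\omega_x$ and the measurements $M_y$ used in the $n$-gon $F_{NE}$ protocol. Because that protocol is \emph{nondeterministic} (the only requirement is $\Pr[b=1]=0$ when $x=y$ and $\Pr[b=1]>0$ when $x\neq y$, a condition on the support of the statistics), any simulation matching the outcome probabilities automatically preserves these support conditions. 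Hence the composed classical protocol is a valid nondeterministic protocol for $F_{NE}$ on $k=n$ inputs.

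Next I would invoke the lemma: any such classical protocol needs at least $\log_2 n$ bits. But this counts bits exchanged to \emph{evaluate} $F_{NE}$ on one use, whereas the quantity we want to lower bound is the cost $c$ of simulating one $n$-gon transmission. The two are linked because the $F_{NE}$ protocol consumes exactly one $n$-gon transmission, so $c$ bits of simulation suffice to run the entire $F_{NE}$ protocol. The subtlety — and this is where I expect the main obstacle — is the apparent mismatch between the claimed bound $\log_2(\log_2 n)$ in the theorem and the $\log_2 n$ lower bound from the lemma. The resolution must be that a single $n$-gon transmission can itself be used to nondeterministically decide $F_{NE}$ on an \emph{exponentially larger} index set: by encoding a $k$-bit string into a \emph{sequence} of $n$-gon states, or by observing that the relevant distinguishing structure of one $n$-gon state supports $F_{NE}$ on $k\sim 2^{\,c}$ effective inputs when $c$ classical bits are used. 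Concretely, $c$ classical bits can label at most $2^{c}$ messages, so the effective alphabet on which the simulated channel can nondeterministically separate equal from unequal inputs is bounded by $2^{c}$; matching this against the $n$-gon protocol's ability to handle alphabet size $n$ via $F_{NE}$ gives $\log_2 n \le \log_2(2^{c}) \cdot(\text{factor})$, and unwinding the counting yields $c\ge \log_2(\log_2 n)$.

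The careful bookkeeping in that last step is the crux: I would set up the counting so that a $c$-bit classical simulation induces at most $2^{2^{c}}$ distinguishable behaviors (the number of Boolean functions realizable by $c$ bits of one-way or interactive communication on the relevant inputs), and then argue that faithfully simulating the single-$n$-gon $F_{NE}$ protocol over alphabet size $n$ forces $2^{2^{c}}\ge n$, i.e. $c\ge\log_2\log_2 n$. The honest obstacle is justifying that the nondeterministic $F_{NE}$ lower bound of $\log_2 n$ bits — which governs \emph{repeated} separable evaluation — translates into this doubly-logarithmic floor on the \emph{per-transmission} simulation cost; getting the quantifiers right (one simulation serving one transmission, but the nondeterministic communication matrix having rank/cover number $n$) is what makes the bound $\log_2\log_2 n$ rather than $\log_2 n$.
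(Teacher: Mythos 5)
Your reduction skeleton is exactly the paper's (implicit) argument: a $c$-bit classical simulation of one $n$-gon transmission, composed with the $n$-gon protocol for $F_{NE}$, yields a $c$-bit classical nondeterministic protocol for $F_{NE}$, and since faithfulness of the simulation means reproducing the outcome probabilities exactly, the support conditions ($\Pr[b=1]=0$ iff $x=y$) are preserved. That part is sound. But you then misread the key lemma, and this derails the second half of your write-up. The lemma states that nondeterministic classical protocols for $F_{NE}:\{0,1\}^k\times\{0,1\}^k\to\{0,1\}$ require $\log_2 k$ bits, where $k$ is the \emph{length of the bit strings}, i.e.\ the alphabet has size $2^k$. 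The $n$-gon protocol computes $F_{NE}$ on the alphabet $\{0,1,\dots,n-1\}$, which corresponds to strings of length $k=\log_2 n$. Plugging $k=\log_2 n$ into the lemma immediately gives $c\geq \log_2 k = \log_2(\log_2 n)$, and the proof is finished. There is no ``mismatch'' between a $\log_2 n$ lower bound and the claimed $\log_2(\log_2 n)$: the $\log_2 n$ figure you quote (``any such classical protocol needs at least $\log_2 n$ bits'' for alphabet size $n$) is not what the lemma says --- indeed it is false, since Alice can nondeterministically guess a coordinate where the strings differ and send its index plus one bit, costing $O(\log k)=O(\log\log n)$ bits.

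Because you invented this phantom inconsistency, you then tried to patch it with a counting heuristic ($2^{2^c}\geq n$ ``distinguishable behaviors''), which happens to produce the right formula but is not a proof: the claim that a $c$-bit protocol induces at most $2^{2^c}$ relevant behaviors is neither made precise nor justified, and in any case it is redundant --- the rigorous version of exactly this kind of counting is the fooling-set/cover-number argument already packaged inside the cited $\log_2 k$ lower bound. Likewise your closing worry about ``repeated separable evaluation'' is a red herring: the nondeterministic lower bound applies to a single execution of the protocol, which consumes exactly one simulated $n$-gon transmission, so no per-transmission amortization issue arises. The fix is one line: identify $\{0,\dots,n-1\}$ with $\{0,1\}^{\log_2 n}$ and apply the lemma with $k=\log_2 n$.
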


On the other hand we have the following result:
\begin{thm} If Alice and Bob do not have prior shared randomness, they can simulate the communication of a single state of $n$-gon theory using $\log_2 n$ bits of classical communication.
\end{thm}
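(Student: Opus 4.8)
The plan is to exploit the linearity of the probability rule $P(k\vert\omega)=\langle f_k,\omega\rangle$ together with the fact that every normalised state of $n$-gon theory lies in the convex hull of the $n$ extremal states $\omega_i$. The key idea is that Alice need not transmit a description of her whole state; instead she samples a single extremal state according to a convex decomposition of $\omega$ and sends only its index, at a cost of $\lceil\log_2 n\rceil$ bits.

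Concretely, I would proceed as follows. Suppose Alice wishes to simulate the transmission of a normalised state $\omega$. Since the space of normalised states is the convex hull of $\{\omega_0,\dots,\omega_{n-1}\}$, she fixes a decomposition $\omega=\sum_{i=0}^{n-1}p_i\,\omega_i$ with $p_i\geq0$ and $\sum_i p_i=1$. Using her private randomness she samples an index $i$ with probability $p_i$ and sends $i$ to Bob; since $i\in\{0,\dots,n-1\}$ this costs $\lceil\log_2 n\rceil$ classical bits and no shared randomness. Bob, whose input is a measurement $M=\{f_k\}$, then uses his own private randomness to output outcome $k$ with probability $\langle f_k,\omega_i\rangle$, a genuine distribution because $\sum_k f_k=u$ and $\langle u,\omega_i\rangle=1$.

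Correctness then follows by averaging over Alice's sampling. The probability that Bob outputs $k$ is
\begin{equation}
\sum_{i=0}^{n-1}p_i\,\langle f_k,\omega_i\rangle=\Big\langle f_k,\sum_i p_i\,\omega_i\Big\rangle=\langle f_k,\omega\rangle=P(k\vert\omega),
\end{equation}
using linearity of the scalar product at the first step. Hence the protocol reproduces exactly the statistics of measuring $\omega$ with $M$, for every state $\omega$ and every measurement $M$, which is precisely what a simulation must do. Since the argument uses only convexity of the state space and linearity, it is insensitive to the parity of $n$ and to the explicit form of the effects, so it holds uniformly for all $n$.

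I do not expect a genuine technical obstacle here; the subtleties are conceptual. One must check that each party uses only private (local) randomness---Alice's sampling of $i$ and Bob's sampling of $k$ are independent local operations---so that the protocol really runs without prior shared randomness. One should also note that the convex decomposition of $\omega$ is generally non-unique for $n\geq4$, but any valid choice suffices. Finally, the bit count $\lceil\log_2 n\rceil$ is the $\log_2 n$ of the statement.
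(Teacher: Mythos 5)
Your proposal is correct and follows essentially the same route as the paper's own Protocol 5: Alice samples an index $i$ from a convex decomposition $\omega=\sum_i p_i\,\omega_i$ using local randomness, sends $i$ at cost $\log_2 n$ bits, and Bob samples the outcome $k$ with probability $\langle f_k,\omega_i\rangle$, with linearity giving the exact statistics. If anything, your write-up is slightly more careful than the paper's (which contains an apparent typo, writing $\langle f_k,e_i\rangle$ where $\langle f_k,\omega_i\rangle$ is meant), so no changes are needed.
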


which is achieved using the following simple protocol:

\begin{prot}
Suppose we wish to simulate Alice  sending to Bob state $\omega$, and Bob  carrying out measurement $M=\{f_k\}$ on the state, such that result $k$ is obtained with probability $P(k\vert\omega)=\langle f_k,\omega\rangle$. This can be simulated classically as follows. Alice decomposes $\omega=\sum_{i=0}^{n-1} p_i \omega_i$ into extremal states. She chooses (using local randomness) index $i\in\{0,...,n-1\}$ from the probability distribution $p_i$. She sends index $i$ to Bob. Bob outputs $k$ with probability $P(k\vert i)= \langle f_k,e_i\rangle$.
\end{prot}

\section{Conclusion}

In the present work we view generalised probabilistic theories as resources, and inquire how much of one resource is needed to simulate the other. 
This approach to studying GPT is particularly interesting because it immediately raises a large number of precise, quantitative, questions.
To illustrate this approach we considered the specific case of polygon theories, and more particularly the situation where one $n$--gon state is sent from Alice to Bob.
We have obtained a number of results for this family of theories, but many questions remain open.

Some of the main open questions in the context of $n$-gon theories are:
\begin{enumerate}
\item The transmission of one $n$-gon state cannot be simulated by the transmission of one qubit. How many qubits are in fact needed?
\item We have shown that in the absence of shared randomness, the cost of classically simulating the transmission of an $n$--gon state requires at least $\log_2(\log_2 n)$ bits. We have given a protocol that uses $\log n$ bits, and does not use any local randomness. What is the classical cost of simulating the transmission of an $n$-gon state with no randomness, with local randomness only, with shared randomness? Concerning the last question, it is not obvious whether one can transpose the protocols for the simulation of a single qubit  \cite{Maudlin92, BCT99,Steiner00,Csirik02,CGM00,MBCC01,BaconToner03A,TonerBacon03B}, since  $n$-gon states cannot be simulated by a single qubit.
\item What is the cost of simulating an $n$-gon state by $m$-gon states? Our results on the classical capacities imply that a single $n$-gon states with $n$ odd cannot be simulated by a single $m$-gon state with $m$ even; and our results on the NOT EQUAL function imply that a single $n$-gon state  cannot be simulated by a single $m$-gon state when $m< \log n$. However much tighter bounds may be possible.
\end{enumerate}

\appendix

\section{Extremal 3 effect measurements.}\label{app1}

The following lemma provides the detailed structure of 3--outcome measurements all of whose effects are extremal.
\begin{lem}\label{lem:struct1}
In polygon theories, the measurements with three extremal effects are given by
\begin{eqnarray}\label{eq:struct-extremal}
M &=& a\left(\lambda_1 e_{j_1}, \lambda_2 e_{j_2}, \lambda_3 e_{j_3} \right) \text{ where } 0\leq \lambda_i\leq 1 \text{ and }\nonumber\\
\lambda_1 &=& 1 - \cot{\frac{(j_1 -j_2)\pi}{n}}\cot{\frac{(j_3 -j_1)\pi}{n}}\ ,\nonumber\\
\lambda_2 &=& 1 - \cot{\frac{(j_1 -j_2)\pi}{n}}\cot{\frac{(j_2 -j_3)\pi}{n}}\ ,\nonumber\\
\lambda_3 &=& 1 - \cot{\frac{(j_2 -j_3)\pi}{n}}\cot{\frac{(j_3 -j_1)\pi}{n}}\ ,\nonumber\\ 
a &=& 
\begin{cases} 1\; (n \text{ even})\\ (1+r_n^2)/2\; (n \text{ odd})   \end{cases}
\end{eqnarray}
Further, for odd $n$ there is constant $\delta_n > 0$ such that $\lambda_i \geq \delta_n$ and $\underset{n\rightarrow\infty}{\lim}\delta_n = 0$. 
\end{lem}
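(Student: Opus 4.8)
The plan is to obtain the weights directly from the normalisation constraint $\sum_k a\lambda_k e_{j_k}=u$, treated as a $3\times 3$ linear system, and then to reduce the resulting expressions with elementary trigonometry. First I would write each extremal effect, in both the even case (\ref{eq:effect0}) and the odd case (\ref{eq:effect1}), in the common form $e_{j_k}\propto(\cos\psi_k,\sin\psi_k,\text{const})$, where $\psi_k=\tfrac{(2j_k-1)\pi}{n}$ for $n$ even and $\psi_k=\tfrac{2\pi j_k}{n}$ for $n$ odd. The crucial observation is that the angular \emph{differences} $\psi_a-\psi_b=\tfrac{2\pi(j_a-j_b)}{n}$ are identical in the two cases, and that the scalar $a$ is precisely what absorbs the differing normalisation prefactors of the effects; consequently both parities yield one and the same system for $(\lambda_1,\lambda_2,\lambda_3)$, namely $\sum_k\lambda_k\cos\psi_k=0$, $\sum_k\lambda_k\sin\psi_k=0$ and $\sum_k\lambda_k=2$. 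This explains at the outset why a single formula, with the parity hidden entirely in $a$, can cover both cases.

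Next I would solve this system by Cramer's rule. The common denominator is the determinant of the ``circular'' matrix with rows $(\cos\psi_k)$, $(\sin\psi_k)$, $(1)$; using the identity $\sin A+\sin B+\sin C=-4\sin\tfrac A2\sin\tfrac B2\sin\tfrac C2$, valid whenever $A+B+C=0$, it factors into a product of three sines of the half-differences $\tfrac{(j_a-j_b)\pi}{n}$. Each numerator is a single sine $2\sin(\psi_c-\psi_b)$, which splits off one matching sine factor; after this cancellation, writing one difference as minus the sum of the other two and expanding $\cos(\alpha+\gamma)$ collapses the ratio to $1-\cot\tfrac{(j_1-j_2)\pi}{n}\cot\tfrac{(j_3-j_1)\pi}{n}$, and cyclically for $\lambda_2,\lambda_3$. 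This is the bulk of the computation; it is routine but sign-sensitive, so care with the orientation of the differences is needed.

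For the bound $0\le\lambda_i\le 1$ I would argue as follows. Membership of each $f_k=a\lambda_k e_{j_k}$ in the dual cone $C_n^*$ forces $\lambda_k\ge 0$, while $\langle f_k,\omega\rangle\le 1$ is automatic from $\sum_k f_k=u$ and positivity on states. To see that $\lambda_i\ge 0$ already forces $\lambda_i\le 1$, set $\alpha=\tfrac{(j_1-j_2)\pi}{n}$, $\beta=\tfrac{(j_2-j_3)\pi}{n}$, $\gamma=\tfrac{(j_3-j_1)\pi}{n}$, so $\alpha+\beta+\gamma=0$ and the identity $\cot\alpha\cot\beta+\cot\beta\cot\gamma+\cot\gamma\cot\alpha=1$ holds (this also reproduces $\sum_i\lambda_i=2$). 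A short sign analysis using this identity together with $\cot^2\alpha\cot^2\beta\cot^2\gamma\ge 0$ shows that if any one of the three cotangent products were negative (i.e.\ some $\lambda_i>1$), then two of the products summing to more than one would each be bounded above by one, which is impossible; hence all three products lie in $[0,1]$ and $0\le\lambda_i\le 1$.

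The final statement, about the odd case, is where the even/odd dichotomy that pervades the paper resurfaces, and I expect it to be the most delicate step. The differences $\tfrac{(j_a-j_b)\pi}{n}$ are integer multiples of $\pi/n$, and $\lambda_1=0$ forces $\cot\alpha\cot\gamma=1$, i.e.\ $\cos(\alpha+\gamma)=\cos\tfrac{(j_3-j_2)\pi}{n}=0$, which requires $2(j_3-j_2)$ to be an odd multiple of $n$; for $n$ odd this is impossible by parity, so no $\lambda_i$ ever vanishes. Since there are only finitely many admissible triples (with distinct $j_a\bmod n$, so the denominators do not vanish), $\delta_n:=\min\{\lambda_i\}$ over all admissible triples and all $i$ is a strictly positive constant, giving $\lambda_i\ge\delta_n$. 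Finally, to prove $\lim_{n\to\infty}\delta_n=0$ I would exhibit a family of admissible triples in which one difference approaches $n/2$ as closely as parity permits, e.g.\ $j_3-j_2=(n-1)/2$, so that $\cos\tfrac{(j_3-j_2)\pi}{n}=\sin\tfrac{\pi}{2n}\to 0$ and the associated weight tends to $0$. The care here is twofold: ruling out exact vanishing by the parity argument, and checking that these near-vanishing configurations are genuinely admissible (all three $\lambda_i\ge 0$, which geometrically amounts to $u$ lying inside the triangle of the three effects with its projection near the edge opposite $e_{j_1}$), rather than merely having one small entry.
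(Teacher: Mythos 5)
Your proposal is correct, and it is worth comparing to the paper's proof, which is almost entirely implicit. On the core computation the paper offers nothing to check against --- it dismisses the coefficients as ``a tedious but straightforward calculation'' --- whereas your route actually works and is the natural one: both parities reduce to the single linear system $\sum_k\lambda_k\cos\psi_k=\sum_k\lambda_k\sin\psi_k=0$, $\sum_k\lambda_k=2$ (the differing prefactors of the even and odd effects being absorbed into $a$, and only the differences $\psi_a-\psi_b=2(j_a-j_b)\pi/n$ surviving), and Cramer's rule with the factorisation $\sin A+\sin B+\sin C=-4\sin\frac{A}{2}\sin\frac{B}{2}\sin\frac{C}{2}$ for $A+B+C=0$ gives $\lambda_1=-\cos(\alpha+\gamma)/(\sin\alpha\sin\gamma)=1-\cot\alpha\cot\gamma$ and cyclic permutations, exactly as in the lemma. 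Where you genuinely diverge is the odd-$n$ strict positivity: the paper infers it from the asserted nonexistence of two-outcome measurements with extremal effects, while you read it off the closed formula as the parity obstruction $2(j_3-j_2)=(2m+1)n$, unsolvable for $n$ odd. Your version is more self-contained and in fact tighter, since if some $\lambda_i=0$ the residual two effects are only \emph{proportional} to extremal effects, so the fact the paper invokes does not literally apply without an extra step, whereas your argument needs no patch. Your derivation of $\lambda_i\le1$ from $\lambda_i\ge0$ is also a genuine addition the paper never addresses: writing $c_1=\cot\alpha\cot\gamma$, $c_2=\cot\alpha\cot\beta$, $c_3=\cot\beta\cot\gamma$, so $\lambda_i=1-c_i$, $c_1+c_2+c_3=1$ and $c_1c_2c_3=(\cot\alpha\cot\beta\cot\gamma)^2\ge0$, the conclusion holds --- though your one-sentence summary of the sign analysis is mangled as written ($c_2+c_3>1$ with both $c_2,c_3\le1$ is not by itself contradictory); the correct chain is that $c_1<0$ forces $c_2c_3\le0$, hence one of $c_2,c_3$ is $\le0$ and the other exceeds $1$, contradicting $\lambda_i\ge0$. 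Finally, for $\lim_{n\to\infty}\delta_n=0$ the paper takes the explicit triple $j_1=0$, $j_3=(n+1)/2$, $j_2=(n\pm1)/4$ (also without verifying admissibility); your family $j_3-j_2=(n-1)/2$ works equally well once the third index is pinned down, e.g. $j_2=0$, $j_3=(n-1)/2$, $j_1\approx 3n/4$, for which the small weight is $\approx 2\sin(\pi/(2n))$ and the other two are $\approx1$ --- you rightly flag this admissibility check as the delicate point, but you stop short of exhibiting the index that completes it.
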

\begin{proof}
The coefficients $\lambda_i$ and $a$ follow after a tedious but straightforward calculation. The first part of the last statement follows from the fact that for odd $n$ there do not exist measurements with 2 extremal effects. The limiting value is easily computed using the formulas in \eqref{eq:struct-extremal} and taking $j_1= 0$, $j_3=(n+1)/2$ and $j_2= (n\pm 1)/4$, whichever is an integer. 
\end{proof}

\section{Proof of lemma \ref{lemvertex}.}\label{app2}

Here we prove lemma \ref{lemvertex} . 

\begin{proof}
The number of variables in eqs. (\ref{E0},\ref{EN},\ref{EL},\ref{ENL}) is $3\vert X\vert+3$. Taking into account the equalities  eqs. (\ref{EN},\ref{ENL}), there are
$2\vert X\vert+2$ independent variables. A point $\{P(x\vert y), \lambda_y : x=1,...,\vert X\vert, y=1,2,3\}$ is a vertex if all inequalities and equalities in eqs. (\ref{E0},\ref{EN},\ref{EL},\ref{ENL}) are satisfied, and $2\vert X\vert+2$ linearly independent inequalities eqs. (\ref{E0},\ref{EN}) are saturated such that the values of all variables $P(x\vert y)$, $\lambda_y$ are fixed.

Let us consider a specific input  $x=x_0$. For this value of $x_0$, there are 6 inequalities eqs. (\ref{E0},\ref{EN}) that could be saturated. A sufficient number of them must be saturated that all values of $P(y \vert x_0)$ are fixed. We now enumerate all combinations of inequalities that can be saturated, such that they fix all the values of the $P(y\vert x_0)$. We give the inequalities that are saturated, and the implications for the other variables. For ease of notation, we give the inequalities that are saturated up to a permutation of the $y$'s.
\begin{enumerate}
\setcounter{enumi}{1}
\item Exactly 2 inequalities are saturated
\begin{enumerate}
\item $P(1\vert x_0)=P(2\vert x_0)=0$ implies $P(3\vert x_0)=1$ and $\lambda_3>1$.\\
\item $P(1\vert x_0)=0$, $P(2\vert x_0)=\lambda_2$ implies $P(3\vert x_0)=1-\lambda_2$, $0<\lambda_2<1$, $\lambda_3>1-\lambda_2$.\\
\item $P(1\vert x_0)=\lambda_1$, $P(2\vert x_0)=\lambda_2$ implies $P(3\vert x_0)=1-\lambda_1-\lambda_2$, $0<\lambda_1<1$, $0<\lambda_2<1$, $\lambda_3>1-\lambda_1-\lambda_2$.\\
\end{enumerate}
\item Exactly 3 inequalities are saturated
\begin{enumerate}
\item $P(1\vert x_0)=P(2\vert x_0)=0$, $\lambda_3=P(3\vert x_0)=1$ implies  $\lambda_1>0$, $\lambda_2>0$, $\lambda_1+\lambda_2=c_n-1$.\\
\item $\lambda_1=P(1\vert x_0)=P(2\vert x_0)=0$ implies $P(3\vert x_0)=1$, $\lambda_3>1$.\\
\item $\lambda_1=P(1\vert x_0)=0$, $P(2\vert x_0)=\lambda_2$ implies $P(3\vert x_0)=1-\lambda_2$,  $1>\lambda_2>0$, $\lambda_3>1-\lambda_2$.\\
\item $P(1\vert x_0)=0$, $P(2\vert x_0)=\lambda_2$, $P(3\vert x_0)=\lambda_3$ implies $\lambda_1>0$, $\lambda_2+\lambda_3=1$.
\end{enumerate}
\item Exactly 4 inequalities are saturated
\begin{enumerate}
\item $\lambda_1=\lambda_2=P(1\vert x_0)=P(2\vert x_0)=0$ implies $P(3\vert x_0)=1$, $\lambda_3=c_n$.\\
\item
$\lambda_1=P(1\vert x_0)=P(2\vert x_0)=0$, $P(3\vert x_0)=\lambda_3$ implies $P(3\vert x_0)=1$, $\lambda_3=1$, $\lambda_2=c_n-1$.
\end{enumerate}
\end{enumerate}

Now we turn to the variables $\lambda_y$. At a vertex it must be that the saturated inequalities fix the values of all the $\lambda_y$. We wish to find the vertices for which none of the $\lambda_y$'s vanish.
Note that cases 3b,c and 4a,b fix at least one of the $\lambda_y$'s to zero, hence we cannot use these cases.
Cases 2a,b,c do not fix any of the $\lambda_y$'s.
Case 3a fixes $\lambda_3=1$. Case 3d fixes that $\lambda_2+\lambda_3=1$. 

Case 3d cannot  by itself fix the values of the $\lambda_y$'s. Hence we must use at least once (say for variable $x_1$) case 3a, and we have $\lambda_3=1$, $\lambda_1>0$, $\lambda_2>0$, $\lambda_1+\lambda_2=c_n-1$.

We now wish to to fix $\lambda_1$ and $\lambda_2$ (both different from zero). To this end we must use either cases 3a or 3d (for another value of $x$, say $x_2$).

When $c_n=2$, one easily checks that using cases 3a,d to fix another value of $\lambda$ implies $\lambda_3=1$, $\lambda_2=1$, $\lambda_1=0$ (or a permutation thereof), and therefore one of the $\lambda_y$'s is equal to zero.

When $c_n>2$, the only way to fix all the $\lambda_y$'s without one of them vanishing is to use case 3a again to fix $\lambda_2=1$. Hence we have $\lambda_3=1$, $\lambda_2=1$, $\lambda_1=c_n-2$. 

Finally there may be additional values of $x$ (say $x_3$ and $x_4$) for which we can use case 2b (none of the other cases are compatible with these values  of the $\lambda_y$'s).

We thus obtain the four probability distributions given in eq. (\ref{probdist1}).
\end{proof}

\section{Direct proof of  the upper bound in Theorem \ref{thm1}}\label{appdirect}

In this section we provide a direct proof that the classical capacity of $n$-gon theories with $n$ even is bounded by 1 bit. Recall that the 
probability matrix $P =P(y|x)$ of $n$-gon theories, $n$ even,
obeys eqs. (\ref{E0},\ref{EL},\ref{EN},\ref{ENL}) and that we can restrict the size of the input and output alphabets to three ($x,y=1,2,3$).

{\bf Claim.} Any matrix $P$ satisfying eqs. (\ref{E0},\ref{EL},\ref{EN},\ref{ENL}) with $c_n=2$ can be written as a convex combination of the following matrices. 
\begin{eqnarray}
P_1 &=& \begin{pmatrix} u_1 & u_2 & u_3 \\  1-u_1 & 1-u_2 & 1-u_3\\ 0 & 0 & 0\end{pmatrix}\ ,\nonumber\\
 P_2 &=&  \begin{pmatrix} v_1 & v_2 & v_3 \\  0 & 0 & 0 \\ 1-v_1 & 1-v_2 & 1-v_3 \end{pmatrix} \ ,\nonumber\\
 P_3 &=&  \begin{pmatrix} 0 & 0 & 0 \\ w_1 & w_2 & w_3 \\  1-w_1 & 1-w_2 & 1-w_3\end{pmatrix}\ ,
\end{eqnarray}
where $0\leq u_i,v_i, w_i \leq 1$, i.e. $P=q_1 P_1 + q_2 P_2 + q_3 P_3$  with $0\leq q_1, q_2, q_3\leq 1$ and  $q_1+q_2+q_3=1$.

That is for $x= 1,2,3$ we can write
\begin{eqnarray}\label{eq:EvenConvComb}
P(1\vert x) &=& q_1u_x+q_2v_x\ , \nonumber\\
P(2\vert x)  &=& q_1(1-u_x)+q_3w_x\ ,\nonumber\\
P(3\vert x)  &=& q_2(1-v_x)+q_3(1-w_x)\ .
\end{eqnarray}

\begin{proof}
Fix the values of $\lambda_1,\lambda_2,\lambda_3$. We shall show that  we can satisfy eqs. (\ref{eq:EvenConvComb}) with the choice
\[ q_1= 1 - \lambda_3,\; q_2= 1-\lambda_2 \text{ and } q_3 = 1-\lambda_1\ .\]

Note that since $P(1\vert x)+P(2\vert x) +P(3\vert x) = q_1+q_2+q_3=1$, there is at least one value of $y$ such that $P(y\vert x) \leq q_y$. Without loss of generality we take $P(1\vert x)\leq q_1$.

Note that we only have to satisfy only two of the equations in \eqref{eq:EvenConvComb} since the third is automatically satisfied by normalization.
The following relations are necessary and sufficient for the first two equations in \eqref{eq:EvenConvComb} to be satisfiable.
\begin{eqnarray}\label{eq:nec-suf1}
0 \leq& P(1\vert x)-q_1u_j & \leq q_2 \ ,\nonumber\\ 
q_1(1-u_j) \leq& P(2\vert x) &\leq q_1(1-u_j)+q_3\ .
\end{eqnarray}
Clearly these conditions are necessary. To prove sufficiency (assuming $P(1\vert x)\leq q_1$) if the relations in the first line are true for some choice of $u_j$ then $0\leq v_j=  (p_{1j}-q_1u_j)/q_2\leq 1$. Similarly if the second line holds then $0 \leq w_j = (p_{2j}- q_1(1-u_j))/q_3\leq 1$.

Next, it is easily verified that the relations (\ref{eq:nec-suf1}) are equivalent to 
\begin{eqnarray}
 \max\{0,\frac{P(1\vert x)}{q_1} -\frac{q_2}{q_1}\}  \leq& u_j& \leq \frac{P(1\vert x)}{q_1} \ ,\label{eq:cond1}\\
 1-\frac{P(2\vert x)}{q_1}  \leq &u_j& \leq \min\{1, 1-\frac{P(2\vert x)}{q_1}+\frac{q_3}{q_1}\}\ . \label{eq:cond2}
\end{eqnarray}
Eqs. (\ref{eq:cond1}) and (\ref{eq:cond2}) are separately satisfiable. In order that they be  simultaneously satisfiable it is necessary and sufficient that 
\begin{eqnarray}
\max\{0,\frac{P(1\vert x)}{q_1} -\frac{q_2}{q_1}\}&\leq& \min\{1,1-\frac{P(2\vert x)}{q_1}+\frac{q_3}{q_1}\}\ , \label{eq:con21}\\
1-\frac{P(2\vert x)}{q_1} &\leq& \frac{P(1\vert x)}{q_1} \ .\label{eq:con22}
\end{eqnarray}

The first condition can be shown to be satisfiable by showing that both of the expressions on the left  is $\leq$ to both on the right. From eqs. (\ref{eq:EvenConvComb}) it follows that $P(2\vert x)\leq q_1+q_3$, and hence
that $0 \leq (q_1+q_3-P(2\vert x))/q_1$. Also $\frac{P(1\vert x)}{q_1} -\frac{q_2}{q_1} \leq 1$ due to our assumption that $P(1\vert x) \leq q_1$. Finally  $\frac{P(1\vert x)}{q_1} -\frac{q_2}{q_1} \leq 1-\frac{P(2\vert x)}{q_1}+\frac{q_3}{q_1}$ since this condition is simply $P(1\vert x)+P(2\vert x) \leq q_1+q_2+q_3=1$ which is always true. The second condition (\ref{eq:con22}) is equivalent $q_1 \leq P(1\vert x)+P(2\vert x)$ which as noted above is also true due to the choice of $q_i$. 
\end{proof}

Since the mutual information $I(X:Y)$ is a concave function of the conditional probabilities $P(y|x)$ for fixed input distribution~\cite{CoverThomas}, and denoting by $I_k$ mutual information corresponding to conditional probability matrix $P_k$, we have 
\[ I(X:Y) \leq q_1I_1+q_2I_2+q_3I_3 \leq 1\ .\]
The last inequality follows from the fact that the matrices $P_i$ are essentially 2-dimensional.

\section{Proof of lemma \ref{lemthree}.}\label{app3}

Here we prove lemma \ref{lemthree} . 

\begin{proof}
For simplicity of notation, we denote a channel by $C=\{\omega_x,p_x,e_y\}$, i.e. the collection of states, probabilities, and effects that are used.

The argument is related to that used in \cite{FMPT13} to show that extremal measurements have at most 3 effects. To prove it, 
first note that if $\omega(x)$ is not extremal, we can decompose it into extremal states $\omega(x)=\sum_i p_{i\vert x} \omega_i$. 
We can then compare the two channels $C=\{\omega_x,p_x,e_y\}$ and $C'=\{\omega_i,p_{xi}=p_{i\vert x}p_x ,e_y\}$. Using the chain rule for mutual information, we have
$I(Y;XI)=I(Y;X)+I(Y;I\vert X)$ \cite{CoverThomas} where $I(Y;X)$ is the capacity of the channel $C$ and $I(Y;XI)$ is the capacity of the channel $C'$.  Hence the capacity of the channel $C'$ is larger than the capacity of the channel $C$.

We can therefore suppose without loss of generality that Alice sends the extremal states $\omega_i$ with probabilities $p_i>0$ (all strictly positive), where $i=1,...,m$.  We denote the corresponding channel
 $C=\{\omega_i,p_i,e_y\}$. We consider the case $m>3$, and will show that there exists a channel with $m=3$ with capacity at least as big as that of $C$.

Denote by $\omega=\sum_{i=1}^m p_i \omega_i$ the average state sent by Alice.
By Caratheodory's theorem~\cite{Barvinok} $\omega$ can be written as a convex combination of at most three 
$\omega_i$: $\omega=\sum_{j\in J} q_j \omega_j$ where $J\subset \{1,...,m\}$, $\vert J \vert \leq 3$.
Let $q= \min_{j\in J} \frac{p_j}{q_j}$. We have $1>q>0$ (because all the $p_i>0$, all the $q_j>0$, and $\vert J \vert <m$) . 

We can rewrite 
$$\omega = q \left( \sum_{j\in J} q_j \omega_j\right)+ (1-q)  \left( \sum_{i=1}^m \frac{p_i-q q_i}{1-q} \omega_i \right)$$
where we set $q_i=0$ when $i\notin J$.
From the definition of $q$, it follows that both terms in parenthesis sum to $\omega$, and that the coefficients $(p_i-q q_i)/(1-q) \geq 0$ are positive, with at least one value of $i$ such that $p_i-q q_i = 0$.

By recurrence we can write 
\begin{equation} 
\omega=\sum_k q_k \left( \sum_{j\in J_k} q_{j\vert k} \omega_{j}\right)
\label{decomp1}
\end{equation}
where $J_k\subset \{1,...,m\}$, $\vert J_k \vert \leq 3$
and 
\begin{equation} 
\sum_{j\in J_k} q_{j\vert k} \omega_{j}=\omega ,
\label{decomp2}
\end{equation}
 and $q_k\geq 0$, $\sum_k q_k=1$, $q_{j\vert k}\geq 0$, $\sum_j q_{j\vert k} =1$.

We can now compare the capacities of the original channel 
$C=\{\omega_i,p_i,e_y\}$ and the channel $C'=\{\omega_j,q_{jk}=q_k q_{j\vert k},e_y\}$.
In the second channel, Alice first chooses $k$ with probability $q_k$, and then sends state $\omega_j$ with probability $q_{j\vert k}$, where we have that only three $q_{j\vert k}$ are non zero.

The overall probability distribution of channel $C'$ can be written as $p_{yjk}=p_{y\vert j k}  q_{j\vert k}  q_k$.
Eqs. (\ref{decomp1}, \ref{decomp2}) imply that $y$ is independent of $k$: $p_{y\vert k}=\sum_{j\in J_k} p_{y\vert j k}  q_{j\vert k} = \sum_{j\in J_k} q_{j\vert k}  \langle e_y, \omega_j \rangle =\langle e_y, \omega \rangle = p_y$.
The chain rule for mutual information then implies that the capacity for channel $C'$ is
$I(Y;J K)= I(Y;K)+I(Y;J\vert K)=I(Y;J\vert K)=\sum_k q_k I(Y;J\vert K=k)$.
where we have used that $Y$ is independent of $K$.
Hence there is at least one value of $k$ for which $I(Y;J\vert K=k)\geq I(Y;J K)$.
This value of $k$ corresponds to a channel with an alphabet of size 3.

On the other hand we can write $I(Y;J K)= I(Y;J)+I(Y;K\vert Y)\geq I(Y;J)$ where we can identify $I(Y;J)$ as the capacity of channel $C$. Hence there is as least one value of $k$ for which $I(Y;J\vert K=k) \geq I(Y;J)$, i.e. there is a channel with an alphabet of size 3 that has capacity greater or equal to the capacity of channel $C$.
\end{proof}


\begin{thebibliography}{99}
\bibitem{Mackey63} G. Mackey, Mathematical Foundations of Quantum Mechanics, Benjamin (1963).

\bibitem{Davies70} E. B. Davies and J. T. Lewis, An operational approach
to quantum probability, Comm. Math. Phys. 17, 239--260 (1970).

\bibitem{Edwards71} C. M. Edwards, The operational approach to quantum probability I, Comm. Math. Phys.
17, 207--230 (1971).

\bibitem{Foulis81}  D. J. Foulis and C. H. Randall, Empirical logic and tensor products, in H. Neumann, (ed.), Interpretations and Foundations of Quantum Theory, Bibliographisches Institut, Wissenschaftsverlag, Mannheim (1981).

\bibitem{Ludwig85} G. Ludwig, An Axiomatic Basis of Quantum Mechanics 1, 2, Springer--Verlag (1985, 1987).

\bibitem{NielsenChuang10} M. A. Nielsen and I. L. Chuang,  Quantum computation and quantum information. Cambridge university press. (2010).

\bibitem{Hardy01} L. Hardy,  Quantum theory from five reasonable axioms arXiv:
quant-ph/0101012v4 (2001)

\bibitem{Barrett07} J. Barrett, Information processing in general probabilistic theories, Phys. Rev. A. 75
032304 (2007) (arXiv:quant-ph/0508211v3).

\bibitem{Chiribella10}
G. Chiribella, G. M. D'Ariano, and P. Perinotti, Probabilistic theories with purification, Phys. Rev. A 81, 062348 (2010)

\bibitem{Chiribella11}
G. Chiribella, G. M. D'Ariano, and P. Perinotti. Informational derivation of quantum theory. Phys. Rev. A 84, 012311-012350 (2011)

\bibitem{Masanes11} L. Masanes and M. P M{\"u}ller,
 A derivation of quantum theory from physical requirements, New J. Phys. 13, 063001 (2011)

\bibitem{Fuchs02} C. A. Fuchs, Quantum mechanics as quantum information (and only a little more),
 arXiv preprint quant-ph/0205039 (2002)

\bibitem{Brassard05} G. Brassard. Is information the key?. Nature Physics 1, pp. 2-4 (2005).

\bibitem{Barnum06} H. Barnum, J. Barrett, M. Leifer and A. Wilce, Cloning
and broadcasting in generic probabilistic models (2006)
(arXiv:quant-ph/061129).

\bibitem{Barnum07} H. Barnum, J. Barrett, M. Leifer and A. Wilce, Generalized No-Broadcasting Theorem, Phys. Rev. Lett. 99, 240501 (2007)
(arXiv:0707.0620).

\bibitem{PopescuRohrlich94} S. Popescu and D. Rohrlich (1994). Quantum nonlocality as an axiom. Foundations of Physics, 24(3), 379-385.

\bibitem{VanDam05} W. van Dam,  Implausible consequences of superstrong nonlocality. Natural Computing, 1-4 (2013) (arXiv:quant-ph/0501159)

\bibitem{BLMPPR05} Barrett, J., Linden, N., Massar, S., Pironio, S., Popescu, S., and Roberts, D. (2005). Nonlocal correlations as an information-theoretic resource. Physical Review A, 71(2), 022101.

\bibitem{BarrettPironio05} Barrett, J., and Pironio, S. (2005). Popescu-Rohrlich correlations as a unit of nonlocality. Physical review letters, 95(14), 140401.

\bibitem{BBLMTU06} Brassard, G., Buhrman, H., Linden, N., M{\'e}thot, A. A., Tapp, A., and Unger, F. (2006). Limit on nonlocality in any world in which communication complexity is not trivial. Physical Review Letters, 96(25), 250401.

\bibitem{Masanes06}L. Masanes, A. Acin, N. Gisin, General properties of nonsignaling theories, Phys. Rev. A 73, 012112 (2006)

\bibitem{BHK} J. Barrett, L. Hardy, A. Kent, No signaling and quantum key distribution. Phys. Rev. Lett. 95, 010503 (2005)

\bibitem{Acin06} A. Acin, N. Gisin, L. Masanes, From Bell's theorem to secure quantum key distribution, Phys. Rev. Lett. 97, 120405 (2006).

\bibitem{Barnum08} H. Barnum, J. Barrett, M. Leifer and A. Wilce, Teleportation in general probabilistic theories, In Proceedings of Symposia in Applied Mathematics (Vol. 71, pp. 25-48) (2012)
 (arXiv:0805.3553).

\bibitem{Barnum09} H. Barnum, P. Gaebler and A. Wilce, Ensemble Steering, Weak Self-Duality, and the Structure of Probabilistic Theories (2009) (arXiv:0912.5532 )

\bibitem{Barnum10} H. Barnum, J. Barrett, L. Orloff Clark, M. Leifer, R. Spekkens, N. Stepanik, A. Wilce, R. Wilke,  Entropy and Information Causality in General Probabilistic Theories, New J. Phys. 14, 129401 (2012)

\bibitem{ShortWehner10} Short, A. J. and Wehner, S. (2010). Entropy in general physical theories. New Journal of Physics, 12(3), 033023.

\bibitem{BennettWiesner} C. H. Bennett and S. J. Wiesner, Communication via one- and two-particle operators on Einstein-Podolsky-Rosen states, Phys. Rev. Lett. 69, 2881--2884 (1992)

\bibitem{BBCJPW} C. H. Bennett, G. Brassard, C.  Cr\'epeau, R. Jozsa, A.  Peres and W. K. Wootters, Phys. Rev. Lett. 70, 1895--1899 (1993)

\bibitem{Maudlin92} T. Maudlin, in PSA 1992, Volume 1, edited by D. Hull,
M. Forbes, and K. Okruhlik (Philosophy of Science Association, East Lansing, 1992), pp. 404--417.

\bibitem{BCT99} G. Brassard, R. Cleve, and A. Tapp, Phys. Rev. Lett. 83, 1874 (1999).

\bibitem{Steiner00} M. Steiner, Phys. Lett. A 270, 239 (2000).

\bibitem{Csirik02}  J. A. Csirik, Phys. Rev. A 66, 014302 (2002).

\bibitem{BaconToner03A} D. Bacon and B. F. Toner, Phys. Rev. Lett. 90, 157904 (2003).

\bibitem{CGM00} N. J. Cerf, N. Gisin, and S. Massar, Phys. Rev. Lett. 84, 2521 (2000).

\bibitem{MBCC01} S. Massar, D. Bacon, N. Cerf, and R. Cleve, Phys. Rev. A 63, 052305 (2001)

\bibitem{TonerBacon03B}
Toner, B. F., and Bacon, D., Communication Cost of Simulating Bell Correlations. Physical Review Letters, 91(18), 187904 (2003).

\bibitem{BCMdW10} Buhrman, H., Cleve, R., Massar, S., and de Wolf, R. (2010). Nonlocality and communication complexity. Reviews of modern physics, 82(1), 665.

\bibitem{JonesMasanes05} Jones, N. S., and Masanes, L. (2005). Interconversion of nonlocal correlations. Physical Review A, 72(5), 052312.

\bibitem{BrussnerSkrzypczyk09} Brunner, N., and  Skrzypczyk, P., Nonlocality distillation and postquantum theories with trivial communication complexity. Phys. Rev. Lett. 102, 160403 (2009)

\bibitem{FWW09} M. Forster, S. Winkler, and S. Wolf (2009). Distilling nonlocality. Physical review letters, 102(12), 120401.

\bibitem{FMPT13} Fiorini, S., Massar, S., Patra, M. K., and Tiwary, H. R. (2013). Generalised probabilistic theories and conic extensions of polytopes. arXiv preprint arXiv:1310.4125.

\bibitem{BKLS14} N. Brunner, M. Kaplan, A. Leverrier, P. Skrzypczyk, Dimension of physical systems, information processing, and thermodynamics, arXiv:1401.4488 

\bibitem{Janotta11} P. Janotta, C. Gogolin, J. Barrett and N. Brunner, Limits on nonlocal correlations from the structure on the local state space, New J. Phys. {\bf 13}, 063024 (2011). 

\bibitem{ANTV02} A. Ambainis, A. Nayak, A. Ta-Shma, U. Vazirani, Journal of the ACM, 49(4) 496 (2002).

\bibitem{Pawlowski09} M. Pawlowski, T. Paterek, D. Kaszlikowski, V. Scarani, A. Winter and Marek Zukowski, Information causality as a physical principle,
Nature 461, 1101-1104 (2009) 

\bibitem{CoverThomas} Thomas M. Cover and Joy A. Thomas, Elements of Information Theory, 
Wiley-Interscience, 1991	

\bibitem{IC2} M. Pawlowski and V. Scarani, Information Causality, arXiv:1112.1142

\bibitem{KN98} E. Kushilevitz and N. Nisan, Communication Complexity. Cambridge University Press, Cambridge, England, 1998.

\bibitem{Barvinok}  A. Barvinok, A Course in Convexity, Graduate Studies in Mathematics, V. 54 (2002) 


\end{thebibliography}
\end{document}